\documentclass[11pt]{article}\usepackage{amsmath,amssymb,amsthm,amsfonts,mathptmx,mathrsfs,fullpage}
\usepackage{tablefootnote}
\usepackage[margin=1in]{geometry}
\usepackage{graphicx,color,cases}
\usepackage{hyperref}  
\usepackage{xfrac}  
\definecolor{DarkGreen}{rgb}{0.1,0.5,0.1}
\definecolor{DarkRed}{rgb}{0.5,0.1,0.1}
\definecolor{DarkBlue}{rgb}{0.1,0.1,0.5}
\hypersetup{
    unicode=false,              pdftoolbar=true,            pdfmenubar=true,            pdffitwindow=false,          pdfnewwindow=true,          colorlinks=true,           linkcolor=DarkBlue,              citecolor=DarkGreen,            filecolor=DarkGreen,          urlcolor=DarkBlue,                          pdftitle={Computing spectral bounds of the Heisenberg ferromagnet from geometric considerations},
    pdfauthor={Yingkai Ouyang},
        }

\usepackage{algorithm} 
\usepackage{algpseudocode,pifont,float}

 \def\case#1{{\left\{  
	\begin{array}{ll}
  #1
	\end{array}
 \right.   }}

\newtheorem{theorem}{Theorem}[section]

\newtheorem{lemma}[theorem]{Lemma}

\newtheorem{corollary}[theorem]{Corollary}
\newtheorem{conjecture}[theorem]{Conjecture}

\newtheorem{proposition}[theorem]{Proposition}
\theoremstyle{definition}

\newtheorem{definition}[theorem]{Definition}

\numberwithin{equation}{section}
\newtheorem{algo}[theorem]{Algorithm}

\newcommand{\sympow}[1]{{^{\{ #1\}}}}

\def\>{\rangle} 
\def\<{\langle}

\def\case#1{{\left\{  
	\begin{array}{ll}
  #1
	\end{array}
 \right.   }}

\floatname{algorithm}{Algorithm}

\def\>{\rangle} 
\def\<{\langle}


\newcommand{\floor}[1]{{\left \lfloor {#1} \right \rfloor  }}

\begin{document}

\title{Computing spectral bounds of the Heisenberg ferromagnet from geometric considerations} 
\author{Yingkai Ouyang  
\footnote{y.ouyang@sheffield.ac.uk
\newline 
{\em 2010 Mathematics Subject Classification.}  Primary 82B20, 81V70; Secondary 46N50, 05C50, 15A18, 15A42.
\newline
{\em Key words and phrases}. Spectral graph theory, Quantum Heisenberg ferromagnets.} 
} 
\date{
\small University of Sheffield, Department of Physics and Astronomy, 226 Hounsfield Rd, Sheffield S3 7RH, UK\\
\small Singapore University of Technology and Design, 
8 Somapah Road, Singapore 487372.\\
Centre for Quantum Technologies, National University of Singapore,
3 Science Drive 2, Singapore 117543.\\
\today}

\maketitle

\begin{abstract}
We give a polynomial-time algorithm for computing upper bounds on some of the smaller energy eigenvalues in a spin-1/2 ferromagnetic Heisenberg model with any graph $G$ for the underlying interactions.
An important ingredient is the connection between Heisenberg models and the symmetric products of $G$. 
Our algorithms for computing upper bounds are based on generalized diameters of graphs. 
Computing the upper bounds amounts to solving the minimum assignment problem on $G$, which has well-known polynomial-time algorithms from the field of combinatorial optimization.
We also study the possibility of computing the lower bounds on some of the smaller energy eigenvalues of Heisenberg models.
This amounts to estimating the isoperimetric inequalities of the symmetric product of graphs. By using connections with discrete Sobolev inequalities, we show that this can be performed by considering just the vertex-induced subgraphs of $G$.
If our conjecture for a polynomial time approximation algorithm to solve the edge-isoperimetric problem holds,
then our proposed method of estimating the energy eigenvalues via approximating the edge-isoperimetric properties of vertex-induced subgraphs will yield a polynomial time algorithm for estimating the smaller energy eigenvalues of the Heisenberg ferromagnet.
\end{abstract}

  \section{Introduction}
  
The Heisenberg model (HM) is a quantum theory of magnetism \cite{Heisenberg1928},
and is prevalent in many naturally occurring physical systems 
such in various cuprates \cite{PhysRevLett.76.3212,chung2001large}, in solid Helium-3 \cite{thouless1965exchange}, and more generally in systems with interacting electrons \cite{Blundell}.
The HM can also be engineered in ultracold atomic gases  \cite{duan2003controlling} and quantum dots \cite{tamura2004tunable}.
Given the abundance of the HM, it may be advantageous to obtain a detailed understanding of its spectral structure. 
Such an understanding would for example help us to analyze the 
feasibility of storing quantum information in HMs via 
encoding into permutation-invariant quantum error correction codes \cite{Rus00,PoR04,ouyang2014permutation,ouyang2015permutation,OUYANG201743,ouyang2019quantum}.
Moreover, given the widespread applicability of magnetic material in classical information processing \cite{cullity2011introduction,jiles2015introduction},
quantum magnets based on the HM could similarly enable quantum technologies.
In addition, the HM also can be used for quantum computation \cite{divincenzo2000universal} and
quantum simulation.
What is most interesting is the relevance of the HM in mathematical physics because it is a paradigmatic model of statistical mechanics. For example, the celebrated Mermin-Wagner theorem \cite{merminwagner-PhysRevLett.17.1133} was proven for the HM.
  
The central object in this paper is the Heisenberg Hamiltonian (HH). 
It is the mathematical embodiment of the HM's energy level structure,
and contains all information necessary to derive every property of the HM.
More precisely, the HH for spin-half particles in the absence of an external magnetic field is a matrix given by
\begin{align}
 \hat H  
= -  \sum_{ \{i,j\} }J_{\{i,j\}}  
\frac{ 
 \sigma^x_i \sigma^x_j +
  \sigma^y_i \sigma^y_j +
   \sigma^z_i \sigma^z_j -  {\bf 1}}{2}, \label{eq1}
\end{align}
where ${\bf 1}$ is the identity matrix, $\sigma^x_i, \sigma^y_i $ and $\sigma^z_i $ as the usual Pauli matrices acting on the $i$-th particle, the sets $\{i,j\}$ are included in the sum whenever particles $i$ and $j$ interact, and $J_{\{i,j\}}$ is an exchange constant which quantifies the strength and nature of the coupling between the particles.
Here, we restrict our attention to ferromagnetic HHs, where every exchange constant is non-negative. 
We write the Hamiltonian in this way because we want the smallest eigenvalue of $\hat H$ to be zero.  
It is a well-known fact that a ferromagnetic HH can be written as a sum of graph Laplacians.
For completeness, we give its proof later in Theorem \ref{theorem:decomposition}.
Studying the spectrum of the HH is thus equivalent to studying the spectrum of these Laplacians.
The field of spectral graph theory deals entirely on determining the eigenvalues of graph Laplacians, 
and there has been an extensive amount of work done on this topic. 
One can for example refer to Chung's book for a review of the most important results in spectral graph theory \cite{chung1997spectral}.

Traditionally, most studies on the HM rely on the Bethe ansatz \cite{Bet31}.
In such approaches, the structure of the eigenvectors is assumed, and later verified to hold by solving for some of the previously undetermined parameters. 
This approach has proved hugely successful in 1D Heisenberg models \cite{haldane1983,Fad84,Koma87,EAT94,Kennedy1985,Kennedy1990,PhysRevLett.76.3212,Ogata2016}.
Recently, lower bounds have been proved on the average free energy of the HM on three dimensional lattices \cite{correggi2014validity} and also on lattices with any dimension \cite{correggi2015validity}.
However bounds on the spectrum of the Heisenberg ferromagnet have yet to be directly addressed.
Moreover, while certain other 2D HMs have been studied \cite{shastry1981exact,Sha88,baker1967two,chung2001large},
the question of how to address HMs of potentially arbitrary geometry remains unresolved.

In recent years, there has been impressive progress towards determining the spectrum of the HH.
The seminal result of Caputo, Liggett and Richthammer proves the Aldous' spectral gap conjecture \cite{caputo2010proof},
which implies that the spectral gap of the HH is equal to the spectral gap of Laplacian representing the graph of interactions of the HM. Since the size of this Laplacian is just the number of the HM's spins, determining the spectral gap of the HH is completely trivial, and can be found numerically in polynomial time \cite{spielman2014nearly}.
One of the most important developments thereafter was made by Correggi, Giuliani and Seiringer \cite{correggi2014validity, correggi2015validity}
where they develop important Sobolev inequalities for discrete graphs, but which are also applicable to the HM.  
Based on this, they find the right inequalities to obtain lower bounds on the free energy of the HM at finite temperatures.
However the problem of obtaining bounds for the higher eigenvalues of the HH has been largely unaddressed.

In this paper, we utilize relatively recent developments in spectral graph theory to obtain new bounds for HH's spectrum.
With regards to the upper bounds, we rely on analytical bounds on the eigenvalues of a graph based on its generalized diameters by Chung, Grigoryan and Yau \cite{CGY96}.
For the lower bounds we use Chung and Yau's Sobolev inequalities on graphs \cite{ChY95}.
There are two innovations provided in this paper. 
First, we identify a probabilistic polynomial-time algorithm to obtain upper bounds on the HH's eigenvalues by reducing the computation of a generalized diameter to that of a minimum assignment problem.
Second, we provide new discrete Sobolev inequalities that are based on deleting vertices from graphs. These inequalities can be used with Chung and Yau's Sobolev inequalities to obtain lower bounds on the eigenvalues of the HH.
To the best of our knowledge, this is the first time graph-theoretic methods are directly used to obtain bounds on the eigenvalues of the HH.

We begin our paper by explaining how the HH is connected to the symmetric power of graphs in Section \ref{sec:graphs-intro}. 
In a preprint by Rudolph, the connection between graphs and the HH was noted, and the terminology
of symmetric power of graphs was coined \cite{Rud02}.
Such graphs, later also known as token graphs \cite{fabila2012token}, 
have been extensively studied in recent years for their graph theoretic properties in \cite{AGRR07,alzaga2010spectra,yamanaka2015swapping,leanos2018connectivity} among many others.
Once we establish the connection of the HH with symmetric powers of graphs, 
we turn our attention to the elementary problem of determining the spectrum of the mean-field Heisenberg ferromagnet, where every pair of spins interacts with the same exchange constant.
Obviously, the SU(2) symmetry of such a model immediately allows one to determine the HH eigenvalues and multiplicities, 
and the eigenprojectors can be in principle calculated using textbook methods with Clebsch-Gordan coefficients.
However we wish to highlight that by using well-known facts about association schemes, 
we can already directly identify the eigenprojectors of this HH in terms of Hahn polynomials 
and generalized adjacency matrices (see Theorem \ref{theorem:mean-field-SD}).

Generalized diameters of graphs play a central role in deriving upper bounds on the spectrum of HHs, as we shall see in Section \ref{sec:upper}.
These generalized diameters can be thought of as the widths of a body when it is interpreted to have a given dimension.
The most important feature of our algorithms is that they run much more efficiently than algorithms that attempt to directly evaluate the eigenvalues of the HH. 
We show that computing these generalized diameters is equivalent to the minimum assignment problem, which is solved efficiently using the Kuhn-Munkres algorithm \cite[Page 52]{schrijver2004combinatorial}. 
Together with analytical bounds on the eigenvalues of a graph based on its generalized diameters by Chung, Grigoryan and Yau \cite{CGY96}, 
we thereby obtain a polynomial-time algorithm for evaluating upper bounds on the eigenvalues of the ferromagnetic HH,
which gives us our result in Theorem \ref{theorem:upper-bound-algo}. 

Isoperimetric inequalities play a central role in deriving lower bounds on the spectrum of HHs in this paper.
An isoperimetric inequality essentially gives a lower bound on the the minimum boundary size of a body with a fixed volume in a given manifold. 
Specializing this to graphs, we require a lower bound on the minimum cut-size of a subset of $k$ vertices, for every possible choice of $k$. 
Such bounds are then called edge-isoperimetric inequalities, which we introduce in Section \ref{sec:computing-lowerbounds}.
Based on edge-isoperimetric inequalities of the symmetric products of graphs, we present lower bounds on the eigenvalues of the ferromagnetic HH (see Theorem \ref{theorem:lowerbound}).
Because deriving edge-isoperimetric inequalities on the symmetric product of graphs is potentially difficult, 
we also derive isoperimetric inequalities on the symmetric product of graphs based on the isoperimetric inequalities on their vertex-induced subgraphs (see Theorem \ref{thm:symprod-seminorm}). 
We introduce some Sobolev inequalities in Section \ref{sec:sobolev}, and proceed to use our results on isoperimetric inequalities on the symmetric product of graphs to obtain lower bounds on all of the HH eigenvalues based on isoperimetric properties of the associated graphs.  
For this, we use the Sobolev inequalties of with Chung, Yau \cite{ChY95} and Ostrovskii's \cite{Ost05} on graphs.

Finally in Section \ref{sec:conclusion}, we discuss some potential implications of our bounds and algorithms.
We then remark on the potential to improve both the upper and lower bounds that we present, 
by further investigation using a combinatorial approach.
We also point out how an advance in the field of approximation algorithms could help to 
make computing lower bounds for the spectrum of the ferromagnetic HH much more efficient. 

\section{Graphs and the Heisenberg model}
\label{sec:graphs-intro}

Since we investigate the spectrum of HHs with graphs of varying dimensions, we need to explain what these graphs and their dimensions are.
Here, a graph corresponding to a HH comprises of vertices from 1 to $n$ which label the particles,
and edges $\{u,v\}$ which label the interaction between particles $u$ and $v$.
A graph's dimension generalizes from the dimension of continuous manifolds.
The edge-boundary of any set of vertices $X$ denoted by $\partial X$ is 
the set of edges in $G$ with exactly one vertex in $X$.  
Suppose that every set $X$ with $k$ vertices in $G$
satisfies the bound $|\partial X| \ge c k^{1-1/\delta}$ for some positive constant $c$ for every $k\le n/2$. 
Then we say that $G$ has a dimension of $\delta$ with isoperimetric number $c$.
This is analogous to the situation where a manifold with fixed volume $k$ and a surface area of at least $c k^{1-1/\delta}$ for some positive constant $c$ has a dimension of $\delta$.
The dimension of a physical system is then the dimension of the corresponding graph of interactions.

To understand how precisely HH is related to graphs, we need to define the symmetric product of a graph.
When $k$ is a non-negative integer with $k \le n$,
the $k$-th symmetric product of a graph $G$ with vertices $V$ and edges $E$ denoted by $G \sympow k$ is a graph with the following properties.
First, $G \sympow k$ has as its vertices all possible subsets of $V$ of size $k$.
Second, the edges of $G \sympow k$ are the sets $\{X, Y\}$ where 
(i) $X$ and $Y$ are subsets of $V$ with $k$ vertices, 
(ii) $X$ and $Y$ have $k-1$ common elements,
and (iii) their symmetric difference, the union of the sets without their intersection, is an edge in $E$. 
In short, $\{X, Y\}$ is an edge in $G \sympow k$ only if the symmetric difference of $X$ and $Y$ is an edge in $E$, {\em i.e} $X \triangle Y \in E$. Examples of the symmetric product of graphs can be seen in Figure \ref{fig:G1} and Figure \ref{fig:G3}.
 \begin{figure}
  \includegraphics[width=\textwidth]{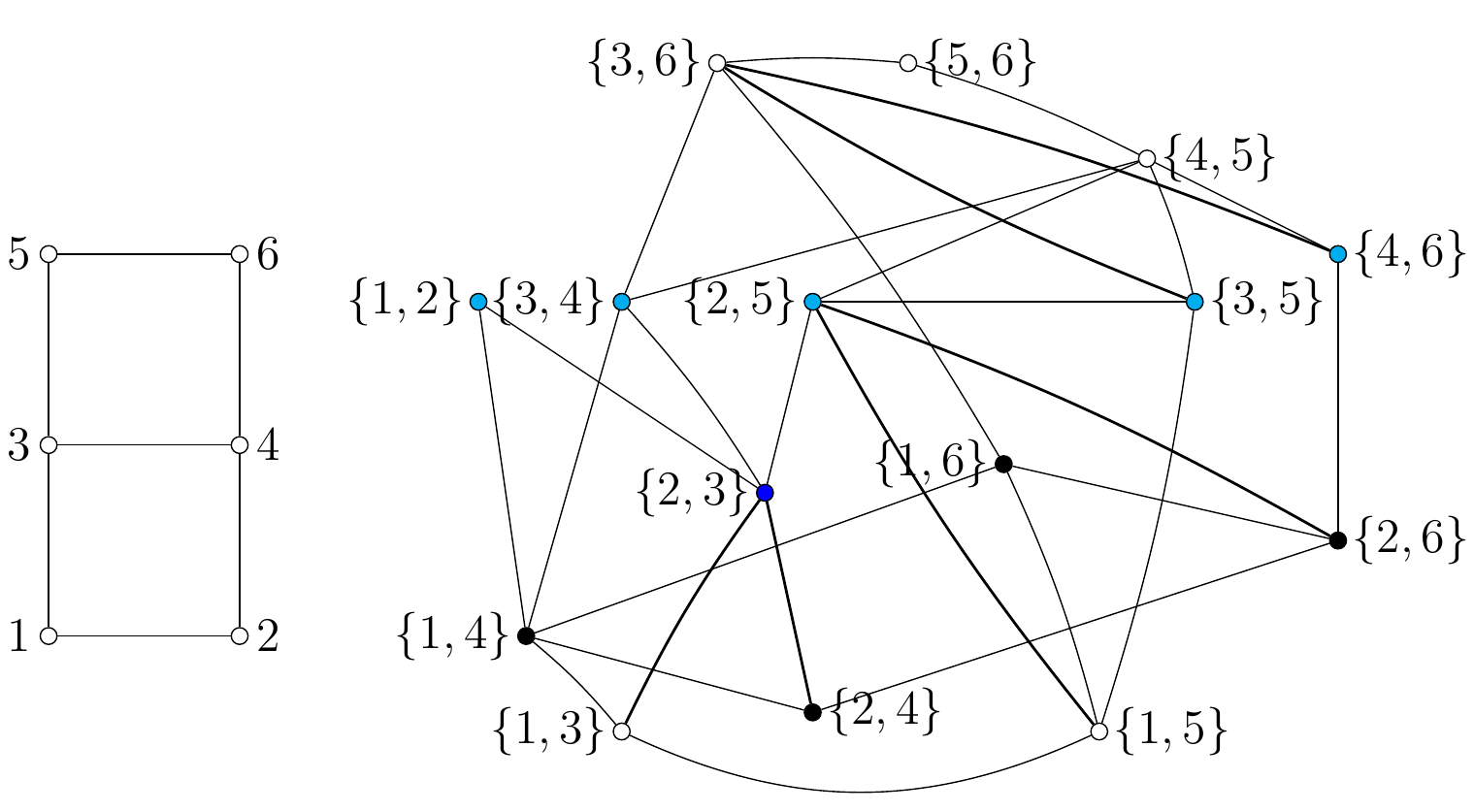}
  \caption{
  On the left is a graph $G$ with six vertices, and on the right is its symmetric square $G \sympow 2$.
  The symmetric cube $G \sympow 3$ is depicted in Fig.~\ref{fig:G3}.
  }
  \label{fig:G1}
\end{figure}

\begin{figure}
  \includegraphics[width=\textwidth]{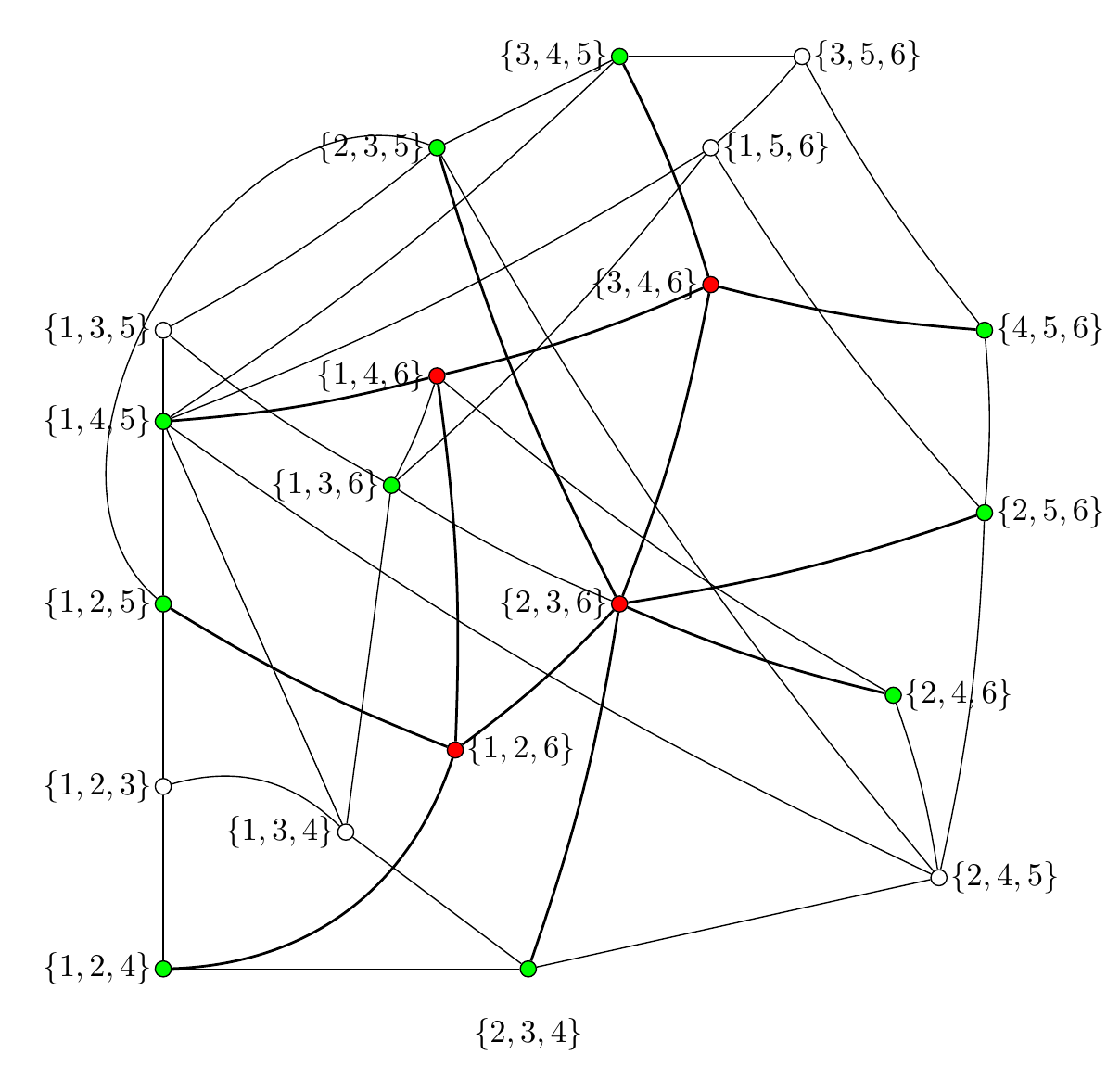}
  \caption{$G \sympow 3$, the symmetric cube of the graph $G$ depicted in Figure \ref{fig:G1}, is shown here. }
  \label{fig:G3}
\end{figure}

Now we proceed to define the Laplacians of $G \sympow k$.
By denoting $|X\>$ as a state with the spins labeled by $X$ in the up state and the remaining spins in the down state where $X$ is a subset of vertices in $G$, 
the Laplacians of $G \sympow k$ are
\begin{align}
L_k \!
=\!\!\!\!
\sum_{\substack{ X \subseteq V \\ |X| = k } }
\!\!\!\!
|\partial X| | X \>\< X | 
-
\sum_{ \substack { X \triangle Y \in E   \\ } }   
( |X\>\<Y| + |Y\>\<X|) \label{eq:Lk-defi}.
\end{align}  
Here, each $L_k$ is the Laplacian of the graph $G \sympow k$ and has rank $\binom n k$.
If we interpret $G \sympow k$ as a discrete manifold, 
the eigenvectors and eigenvalues of $L_k$ are its normal modes and associated resonance frequencies.

If we normalize the HH so that every non-zero exchange constant is equal to 1, we get the normalized Hamiltonian 
\begin{align}
 \hat H_1
= -  \sum_{ \{i,j\} \in E } 
\frac{ 
 \sigma^x_i \sigma^x_j +
  \sigma^y_i \sigma^y_j +
   \sigma^z_i \sigma^z_j -  {\bf 1}}{2}. \label{eq:H1-defi}
\end{align}
This normalized Hamiltonian $\hat H_1$ is just a sum of pairwise orthogonal matrices $L_k$ \cite[Appendix A]{AGRR07},
as we can see from the following theorem.
\begin{theorem}
\label{theorem:decomposition}
Let $G=(V,E)$ be a graph with $n$ vertices.
Then $\hat H_1 = L_0 + \dots + L_n$ where $L_k$ are as given in Eq.~(\ref{eq:Lk-defi}) and $\hat H_1$ is as given in Eq.~(\ref{eq:H1-defi}).
\end{theorem}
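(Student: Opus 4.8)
The plan is to work directly in the tensor-product Hilbert space $(\mathbb{C}^2)^{\otimes n}$ and compute the action of $\hat H_1$ on the standard computational basis $\{|X\rangle : X \subseteq V\}$, where $|X\rangle$ has spins in $X$ up and the rest down. Since each $L_k$ acts only on the span of basis states $|X\rangle$ with $|X| = k$, and these subspaces are mutually orthogonal and together span the whole space, it suffices to check that $\hat H_1$ restricted to the weight-$k$ subspace coincides with $L_k$ for each $k$; the orthogonality of the $L_k$ and the claim $\hat H_1 = L_0 + \dots + L_n$ both follow immediately once this is established.

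The key computation is the action of a single bond term on a basis state. First I would record the elementary facts that, for $\{i,j\} \in E$, the operator $h_{ij} := \tfrac12(\sigma^x_i\sigma^x_j + \sigma^y_i\sigma^y_j + \sigma^z_i\sigma^z_j - \mathbf{1})$ (note $\hat H_1 = -\sum_{\{i,j\}\in E} h_{ij}$) acts on the two qubits $i,j$ as follows: it annihilates $|{\uparrow\uparrow}\rangle$ and $|{\downarrow\downarrow}\rangle$, and on the span of $|{\uparrow\downarrow}\rangle, |{\downarrow\uparrow}\rangle$ it acts as the matrix with $-1$ on the diagonal and $+1$ off-diagonal (i.e. it is minus the graph Laplacian of a single edge, equivalently $|{\uparrow\downarrow}\rangle - |{\downarrow\uparrow}\rangle$ is its $-2$... — more precisely $h_{ij}$ has eigenvalue $0$ on the triplet sector and $-1$... let me state it as: $h_{ij}$ maps $|{\uparrow\downarrow}\rangle \mapsto -|{\uparrow\downarrow}\rangle + |{\downarrow\uparrow}\rangle$ and symmetrically). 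This is just the well-known fact that $h_{ij}$ is (up to sign and constant) the swap operator minus identity on qubits $i,j$. Translating to basis states $|X\rangle$: if both or neither of $i,j$ lie in $X$, then $h_{ij}|X\rangle = 0$; if exactly one does — say $i \in X$, $j \notin X$, so $\{i,j\}$ contributes to $\partial X$ — then $-h_{ij}|X\rangle = |X\rangle - |X'\rangle$ where $X' = (X \setminus \{i\}) \cup \{j\}$ satisfies $X \triangle X' = \{i,j\} \in E$.

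Summing $-h_{ij}$ over all edges $\{i,j\} \in E$ then gives $\hat H_1 |X\rangle = \sum_{\{i,j\} \in \partial X} \big( |X\rangle - |X'\rangle \big)$. The number of edges in $\partial X$ is exactly $|\partial X|$, giving the diagonal term $|\partial X|\,|X\rangle\langle X|$, and each such edge produces one off-diagonal term $-|X'\rangle\langle X|$ with $X \triangle X' \in E$; conversely every pair $\{X,Y\}$ with $X \triangle Y \in E$ arises this way from exactly the edge $X\triangle Y$. Collecting these terms over all $X$ with $|X| = k$ reproduces Eq.~(\ref{eq:Lk-defi}) verbatim, so $\hat H_1$ agrees with $L_k$ on the weight-$k$ subspace. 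I do not anticipate a genuine obstacle here — the argument is a bookkeeping exercise — but the one point requiring care is the correspondence between unordered edges $\{i,j\} \in \partial X$ and the resulting basis vectors $|Y\rangle$, and making sure each unordered pair $\{X,Y\}$ in the sum of Eq.~(\ref{eq:Lk-defi}) is counted with the correct coefficient $(|X\rangle\langle Y| + |Y\rangle\langle X|)$ rather than twice or with the wrong sign; writing the single-edge computation out explicitly as a $2\times 2$ (or $4\times 4$) matrix on qubits $i,j$ settles this cleanly.
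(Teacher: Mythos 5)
Your proposal is correct and follows essentially the same route as the paper: both reduce the bond term to the swap-minus-identity operator $\pi_{i,j}-\mathbf{1}$, compute its action on the computational basis states $|X\rangle$ (zero when $i,j$ are both in or both out of $X$, and $|X\rangle-|X\triangle\{i,j\}\rangle$ when $\{i,j\}\in\partial X$ after the overall sign), and sum over edges to recover the diagonal entry $|\partial X|$ and the off-diagonal terms of Eq.~(\ref{eq:Lk-defi}) on each fixed-weight subspace. The only cosmetic difference is that you discard the non-boundary edges immediately, whereas the paper carries the $(m-|\partial X|)|X\rangle$ term and cancels it at the end; your aside about the singlet eigenvalue should read $-2$ rather than $-1$, but the explicit action you ultimately state is the correct one.
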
 
This decomposition of the ferromagnetic HH with graph $G$ as sum of pairwise orthogonal matrices, with each matrix associated with the symmetric products of $G$, has already been known for years \cite[Appendix A]{AGRR07}. 

The decomposition of the normalized Hamiltonian as given in Theorem 
\ref{theorem:decomposition} holds because of its fundamental connections with Laplacians in graph theory \cite[Chapter 13]{godsil2001algebraic}. 
Using a graph-theoretic perspective, some trivial properties this normalized Hamiltonian can be easily seen.
For example, when the graph $G$ is connected, 
each $L_k$ has exactly one eigenvalue equal to zero with corresponding eigenvector 
$\binom {n}{k}^{-1/2}\sum_{\substack{X \subseteq V\\ |X| = k}}|X\>$  \cite[Lemma 13.1.1]{godsil2001algebraic}. 
Hence the ground state energy of $\hat H_1$ is zero with degeneracy $n+1$,
and the ground space is spanned by the Dicke states $|D^n_k\>$ \cite{ouyang2014permutation}, 
where $|D^n_k\>$ is a normalized superposition of all $|X\>$ for which $X$ is a subset of $\{1,\dots, n\}$ of size $k$.
Moreover, for any graph, the Laplacians $L_k$ and $L_{n-k}$ are unitarily equivalent,
because of the equivalence of $G \sympow k$ and $G \sympow {n-k}$ under set complementation.
To see this, denote $\overline X$ as the set complement of $X \subset V$, 
and note that $L_{n-k} =  U_k L_k U_k ^\dagger$ where
\begin{align}
U_k = \sum_{X \subseteq V : |X|=k } |\overline X\> \< X|  	. \label{eq:U-definition}
\end{align}
Hence it suffices to only study Laplacians $L_k$ for which $k \le \frac n 2$. 

The implication of Caputo, Liggett and Richthammer's proof of Aldous' spectral gap conjecture  \cite{caputo2010proof} is that the spectral gap of every $L_k$ for $k=1,\dots, n-1$ is identical.
This renders the problem of finding the spectral gap of HHs trivial, because $L_1$ is effectively a size $n$ matrix and its spectral gap can be efficiently solved numerically, 
for example by using Spielman and Teng's celebrated algorithm \cite{spielman2014nearly}.

In this paper, we will focus on the obtaining bounds of the eigenvalues of every $L_k$, which we denote as 
$\lambda_0(L_k), \lambda_1(L_k), \dots, \lambda_{\binom n k -1}(L_k)$. We call $\lambda_1(L_k)$ the spectral gap of $L_k$ and $ \lambda_{\rm max}(L_k)=  \lambda_{\binom n k -1}(L_k)$ the largest eigenvalue of $L_k$. We order these eigenvalues so that
\begin{align}
0=\lambda_0(L_k)\le \dots \le \lambda_{\binom n k -1}(L_k).
\end{align}
Now we proceed to give the proof of 
Theorem \ref{theorem:decomposition}.
\begin{proof}[Proof of Theorem \ref{theorem:decomposition}] 
The first step is to notice that the swap operator of two qubits can be written as 
\begin{align}
& (|0\>\otimes |0\>)( \<0|\otimes \<0|)
 +
  (|0\>\otimes |1\>)( \<1|\otimes \<0|) \notag\\
 +&
  (|1\>\otimes |0\>)( \<0|\otimes \<1|)
 +
  (|1\>\otimes |1\>)( \<1|\otimes \<1|),
\end{align}
and is identical to the sum 
$\frac{\sigma^x_1 \sigma^x_2 + \sigma^y_1 \sigma^y_2  +\sigma^z_1 \sigma^z_2 + {\bf 1}}{2}.$
Then denoting the operator that swaps qubits $i$ and $j$ as $\pi_{i,j}$, we have the identity
\begin{align}
\pi_{i,j}- {\bf 1} =\frac{\sigma^x_i \sigma^x_j + \sigma^y_i \sigma^y_j  +\sigma^z_i \sigma^z_j - {\bf 1}}{2}.
\end{align}
This allows us to rewrite the normalized HH with a graph $G=(V,E)$ in terms of swap operators, so that
\begin{align}
\hat H_1 = \sum_{\{i,j\} \in E}  ({\bf 1} -\pi_{i,j}).
\end{align}
Next, we let $X$ denote any subset of vertices $V=\{1,\dots,n\}$.
Then for any distinct $i$ and $j$ from the set $V$, 
we have 
\begin{align}
\pi_{i,j} |X\> = \case{
|X\> &, i,j \in X \\
|X\> &, i,j \notin X \\
|X \triangle \{i,j\} \> &, \{i,j\} \in \partial X \\
}.
\end{align}
This allows us to obtain 
\begin{align}
\sum_{\{i,j\} \in E}\pi_{i,j} |X\>
&= 
\sum_{\{i,j\} \in \partial X}\pi_{i,j} |X\>
+
\sum_{\{i,j\} \notin \partial X}\pi_{i,j} |X\> \notag\\
&=
\sum_{\{i,j\} \in \partial X} |X \triangle \{i,j\}\>
+
\sum_{\{i,j\} \notin \partial X}|X\> \notag\\
&=
\sum_{\{i,j\} \in \partial X} |X \triangle \{i,j\}\>
+
(m-|\partial X|) |X\>,
\end{align}
where $m$ denotes the number of edges in $E$.
Hence 
\begin{align}
\hat H_1 |X\> 
&= 
\sum_{\{i,j\}\in E} |X\> -
\sum_{\{i,j\}\in E} \pi_{i,j} |X\>   \notag\\ 
&= 
|\partial X|  |X\> -
\sum_{ \{i,j\} \in \partial X}  |X \triangle \{i,j\}\>   .
\end{align}
Clearly if $Y$ is a subset of $V$ that has a different size from $X$, then 
$\<Y|\hat H_1 |X\> = 0$. This immediately implies that $\hat H_1$ can be written as a sum of orthogonal matrices, each of them supported on the space spanned by $|X\>$ where $X$ have constant size.
Next, note that $\<X|\hat H_1 |X\> = |\partial X|$, which implies that the diagonal entries of $L_k$ are given by the sizes of the corresponding edge-boundaries of $k$-sets.
Finally, note that if $Y$ has the same size as $X$, then $\<Y|\hat H_1 |X\> = 0$ whenever $X \triangle Y \notin E$ and  $\<Y|\hat H_1 |X\> = 0$ whenever $X \triangle Y \in E$. This proves the result.
\end{proof}

\section{Exact solutions for the mean-field model}

 We begin with a combinatorial approach for producing the exact solution for a mean-field HM. Such a HM has $n$ spins, 
 and every pair of spin interacts with exactly the same exchange constant $J$. 
 In this case, the normalized Hamiltonian is 
\begin{align}
\hat H_{1} = -  \sum_{i=1}^n \sum_{j=1}^{i-1} 
\frac{ \sigma^x_i \sigma^x_j +
  \sigma^y_i \sigma^y_j +
   \sigma^z_i \sigma^z_j  - {\bf 1}
 }{2}.
\end{align}
From the perspective of SU(2) symmetry, this model is trivial.
This is because we can write $\hat H_1 = - {\vec S_{\rm tot} \cdot \vec S_{\rm tot} \over 2} + \frac{n(n+1)}{2} {\bf 1} $,
where $\vec S_{\rm tot}=\sum_{i=1}^n \vec S_i$ and $\vec S_i = \vec\sigma_i/2
$. 
The spectrum along with the degeneracies is directly given by the representations contained
in the direct product of $n$ spin 1/2 representations,
\[1/2 \otimes 1/2 \otimes ... \otimes 1/2,\]
which can be easily solved using standard techniques.
Moreover, the corresponding eigenvectors can be in principle calculated using textbook methods with Clebsch-Gordan coefficients.
However this computation can be fairly tedious. 
We show how the eigenvalues and eigenprojectors of $\hat H_1$ can be alternatively obtained from a combinatorial perspective.

Note that for $\hat H_1$, the graph of interactions is precisely the complete graph on $n$ vertices.
The symmetric products of the complete graph are the Johnson graphs for which the spectral problem has been exactly solved using association schemes \cite{delsarte1973algebraic,bannai1984algebraic}.
Using this connection, we can use prior knowledge of the Johnson schemes to conclude that $L_k$ has exactly one eigenvalue equal to zero, and its other eigenvalues are $j(n+1-j)$ with multiplicities $m_j= \binom n j - \binom n {j-1}$ for $j = 1,\dots, k$ \cite[Section 12.3.2]{brouwer2011spectra}.
Hence the positive eigenvalues of $\hat H_{1}$
are 
\begin{align}
 j(n+1-j) 
\end{align}
with multiplicities 
\begin{align}
(n+1-2j)m_j,
\end{align}
where $j = 1,\dots, \floor{n/2}$.

What is most remarkable about the connection between association schemes and the mean-field Heisenberg model is that we can assign a combinatorial interpretation to the matrices $L_k$. In particular,  we can analytically decompose $L_k$ as a linear combination of eigenprojectors, where each eigenprojector is in turn a linear combination of generalized adjacency matrices.
We proceed to explain what these generalized adjacency matrices are.
Now the adjacency matrix of $L_k$ is 
\begin{align}
A_{k,1} = \sum_{ \substack { |X \triangle Y|=2   \\ } }   
( |X\>\<Y| + |Y\>\<X|).
\end{align}
Namely, the matrix element of $A_{k,1}$ labeled by $|X\>\<Y|$ has a coefficient of 1 if $X$ is adjacent to $Y$ in $G \sympow k$, and equal to zero otherwise.
Since two vertices in a graph are adjacent if and only if they are a distance of one apart, we can define the generalized adjacency matrices by having 
\begin{align}
A_{k,z} = \sum_{\substack{X ,Y\subseteq \{1,\dots, n\}\\ |X \triangle Y| = 2 z }} |X\>\<Y|.
\end{align}
Here, the matrix element of $A_{k,z}$ labeled by $|X\>\<Y|$ has a coefficient of 1 if $X$ is a distance of $z$ from $Y$ in $G \sympow k$, and equal to zero otherwise.
We call $A_{k,z}$ the $z$-th generalized adjacency matrix of the Johnson graph associated with $L_k$ relating $k$-sets a distance of $z$ apart.
For completeness, let $A_{k,0}$ denote a size $\binom n k$ identity matrix.
Now let 
\begin{align}
h_{k,j}(z) = m_j 
\sum_{a=0}^j (-1)^a 
\frac{   \binom{j}{a}    \binom{n+1-j}{a}    }
{    \binom{k}{a}    \binom{n-k}{a}   } \binom z a
\end{align}
denote a Hahn polynomial \cite[(18) and (20)]{delsarte1998association}.
Then, properties of the Johnson scheme given in Ref.~\cite{delsarte1998association} imply that for $k=1,\dots, \floor{n/2}$, the Laplacians $L_k$ have the spectral decomposition 
\begin{align}
L_k = \sum_{j=1}^k j(n+1-j) P_{k,j} \label{eq:lk-decomposition}
\end{align}
where 
\begin{align}
  P_{k,j} = \frac{1}{\binom n k  }
  \sum_{z=0}^k h_{k,j}(z) A_{k,z}
\end{align}
are pairwise orthogonal projectors. 
To make the spectral decomposition of the normalized mean-field HH explicit, we present the following theorem.

\begin{theorem}
\label{theorem:mean-field-SD}
Let $G=(V,E)$ be a complete graph. Then a normalized HH on this graph $\hat H_1$ has the spectral decomposition
\begin{align}
\hat H_{1}
&=  
\sum_{j=1}^{(n-1)/2} j(n+1-j) 
 \sum_{k=j}^{(n-1)/2}  
 \left(
P_{k,j} + U_k P_{k,j} U_k ^\dagger  
\right)
\end{align}  
when $n$ is odd, and 
\begin{align}
\hat H_{1}
&= 
\sum_{j=1}^{n/2-1} j(n+1-j) 
\left(
 \sum_{k=j}^{n/2}  
P_{k,j}
+
 \sum_{k=j}^{n/2-1}  
U_k P_{k,j} U_k ^\dagger  
\right)
+
\frac{n}{2}\left(
\frac{n}{2}+1
\right) P_{n/2,n/2}
\end{align}  
when $n$ is even.
\end{theorem}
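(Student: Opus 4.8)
\emph{Proof strategy.} The plan is to assemble the statement from three facts already in hand: the decomposition $\hat H_1 = L_0 + \dots + L_n$ of Theorem~\ref{theorem:decomposition}; the complementation identity $L_{n-k} = U_k L_k U_k^\dagger$ with $U_k$ as in Eq.~(\ref{eq:U-definition}); and the Johnson-scheme spectral decomposition $L_k = \sum_{j=1}^k j(n+1-j) P_{k,j}$ of Eq.~(\ref{eq:lk-decomposition}), which holds for $k \le \floor{n/2}$. So the only real work is index bookkeeping and reorganizing a double sum.

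First I would split $\hat H_1 = \sum_{k=0}^n L_k$ at the middle level $\floor{n/2}$. For $k \le \floor{n/2}$ I substitute Eq.~(\ref{eq:lk-decomposition}) directly; the term $k=0$ contributes $L_0 = 0$, which is the empty-sum case, so it is harmless. For $k > \floor{n/2}$ I write $k = n-k'$ with $k'$ in the appropriate lower range and use $L_k = L_{n-k'} = U_{k'} L_{k'} U_{k'}^\dagger = \sum_{j=1}^{k'} j(n+1-j)\, U_{k'} P_{k',j} U_{k'}^\dagger$, again from Eq.~(\ref{eq:lk-decomposition}). This produces two double sums, a ``$P$-block'' over $k \le \floor{n/2}$ and a ``$U$-block'' over the complementary levels.

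Second, in each block I interchange the order of summation via $\sum_k \sum_{j=1}^k = \sum_j \sum_{k\ge j}$, pulling the common coefficient $j(n+1-j)$ to the front. When $n$ is odd, $\floor{n/2} = (n-1)/2$ and the complementary levels $k > (n-1)/2$ are in bijection under $k \mapsto n-k$ with the levels $0,\dots,(n-1)/2$, so the two blocks merge term-by-term into $P_{k,j} + U_k P_{k,j} U_k^\dagger$ for $j \le k \le (n-1)/2$, which is the first displayed formula. When $n$ is even the level $k=n/2$ is its own complement and hence appears only in the $P$-block; there the $U$-block runs only up to $k = n/2-1$. Peeling off the $j=n/2$ summand of the $P$-block, which is exactly $P_{n/2,n/2}$ with coefficient $\tfrac n2\bigl(\tfrac n2+1\bigr)$, leaves the paired sums over $j \le n/2-1$ and yields the second displayed formula.

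The step most prone to error — and essentially the only obstacle — is precisely this bookkeeping: ensuring the self-complementary middle level $k=n/2$ for even $n$ is counted once and only once, getting the summation ranges right after the interchange, and, if one wishes to justify the phrase ``spectral decomposition'', observing that the operators appearing are mutually orthogonal projectors. The last point is immediate: $P_{k,j}$ and $U_k P_{k,j} U_k^\dagger$ for different $k$ are supported on orthogonal subspaces (those indexed by subsets of different sizes, since distinct $L_k$ act on orthogonal subspaces), while for a fixed $k$ the $P_{k,j}$ are pairwise orthogonal projectors by Eq.~(\ref{eq:lk-decomposition}) and conjugating by the unitary $U_k$ preserves this.
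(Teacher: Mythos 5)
Your proposal is correct and follows essentially the same route as the paper's proof: decompose $\hat H_1$ into the $L_k$, pair each $L_k$ with its complement $L_{n-k}=U_kL_kU_k^\dagger$, substitute the Johnson-scheme decomposition (\ref{eq:lk-decomposition}), and interchange the double sum via the rearrangement identity, treating the self-complementary level $k=n/2$ separately for even $n$. Your added remark on the mutual orthogonality of the projectors is a small but welcome justification of the phrase ``spectral decomposition'' that the paper leaves implicit.
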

\begin{proof}
The proof of this theorem relies on the identity
\begin{align}
\sum_{u=0}^a \sum_{j=1}^u a_{u,j}
=
\sum_{u=1}^a \sum_{j=1}^u a_{u,j}
=
\sum_{j=1}^a \sum_{u=j}^a  a_{u,j} \label{eq:rearrangement}
\end{align}
which holds for all non-negative integers $a$, and any complex coefficients $a_{u,j}$.

When $n$ is odd, 
we can write
\begin{align}
\hat H_1 = 
\sum_{k=0}^{(n-1)/2}
\left(
L_k +  U L_k U ^\dagger
\right),
\end{align}
where $U$ is the unitary as defined in (\ref{eq:U-definition}) 
and $L_j$ is as given in (\ref{eq:lk-decomposition}).
Substituting the decomposition of $L_j$, we get
\begin{align}
\hat H_1 = 
\sum_{k=0}^{(n-1)/2}
\sum_{j=1}^k 
j(n+1-j) 
\left(
P_{k,j}  +  U P_{k,j} U ^\dagger
\right),
\end{align}
Applying (\ref{eq:rearrangement}) then yields the result for odd $n$.
When $n$ is even, we have 
\begin{align}
\hat H_1 = 
\sum_{k=0}^{n/2-1}
\left(
L_k +  U L_k U ^\dagger
\right)
+L_{n/2}.
\end{align}
By using the techniques used to prove the case for odd $n$,
we get
\begin{align}
\hat H_{1}
&= 
\sum_{j=1}^{n/2-1} j(n+1-j) 
 \sum_{k=j}^{n/2-1}  
 \left(
P_{k,j} + U_k P_{k,j} U_k ^\dagger  
\right)
+
L_{n/2}.
\end{align}  
Substituting the value of (\ref{eq:lk-decomposition}) for $L_{n/2}$,
we get the result.
\end{proof}
•

\section{Upper bounds for the Heisenberg spectrum}
\label{sec:computing-upperbounds}
\subsection{Simple two-sided bounds on the largest eigenvalue}

We obtain bounds on the largest eigenvalue of ferromagnetic HHs with graphs having dimension $\delta$ with isoperimetric number $c$, and maximum vertex degrees $\beta$. 
Note that obtaining bounds on the largest eigenvalue of the normalized HH $\hat H_1$, 
amounts to obtaining bounds on $\lambda_{\rm max}(L_k)$. 
Now the largest eigenvalue of the Laplacian of any graph is at least its maximum vertex degree \cite[Page 149, line 7]{Merris1994} and at most twice its maximum vertex degree from Gersgorin's circle theorem \cite{Ger31,varga-GCT}.
The upper bound can also slightly improved over Gersgorin's circle theorem to be at most the sum of the largest and the second largest vertex degrees \cite[(6)]{Merris1994}.
Thus,
\begin{align}
 c k^{1-1/\delta}\le \lambda_{\rm max}(L_k) \le 2 k \beta
\end{align} 
for $1 \le k \le n/2$. Since $\hat H_1= L_0 + \dots + L_{n}$, we get 
\begin{align}
	c  \floor{n/2}^{1-1/\delta} \le  \lambda_{\rm max}(\hat H_1)   \le  n \beta.
\end{align}

\subsection{Upper bounds from graph diameters}
\label{sec:upper}
In this subsection, we outline an algorithmic approach for finding upper bounds on the smaller eigenvalues of the HH.
This approach relies crucially on the generalizations of the diameter of a graph.
The diameter of a graph is the length of its shortest path, and intuitively measures the size of the graph. In the case when the graph has the geometry of a hypercube of dimension $d$, its diameter will be the length between the vertices of the hypercube that are furthest apart. 
The generalization of the diameter that we will consider allows us to quantify, in the case of the hypercube, the length of its sides. In particular, the $d$-diameter of a $d$-dimensional hypercube will be precisely the length of its side. Intuitively, the $d$-diameter of a body is its width when it is interpreted to have $d$ dimensions.
The generalized diameters are important because they can give upper bounds on the eigenvalues of a graph Laplacian \cite{chung1996upper,CGY96}.

The generalized diameter of a graph quantifies its sparsity.
It is then reasonable to expect that the larger the generalized diameter, the smaller the upper bound on the eigenvalues can be, since a sparse graph ought to have smaller eigenvalues than a highly connected graph. In the extreme case when a graph comprises of disconnected vertices, its generalized distances are all infinite, and every eigenvalue is equal is zero. Thus in this case, we would anticipate that the upper bound we get from the diameter is also equal to zero. 
This is indeed the case. 
When a graph has $j$ distinct connected components, by selecting $j$ vertices, one from each of these connected components, the corresponding generalized distance is infinite.
This then implies that the $c$th smallest eigenvalue of the corresponding graph Laplacian is at most zero.
Since it is known that a graph with $c$ distinct components has a graph Laplacian with exactly $c+1$ zero eigenvalues \cite[Lemma 13.1.1]{godsil2001algebraic}, in this sense, the bound of \cite[Corollary 4.4]{CGY96} can be said to be tight. 

To understand the generalized diameter of a graph, we need to review the concept of the distance amongst a subset of its vertices. 
Now, the distance between a pair of vertices $v_a$ and $v_b$ is the just the length of the shortest path connecting them, which we denote as $d(v_a,v_b)$.
This can be computed using Algorithm \ref{alg:dist}. 
\begin{algo}
{\texttt{Dist}$(G=(V,E))$, Compute pairwise distances in $G$. \label{alg:dist}}
\begin{algorithmic}
\State $D \gets$ size $n$ matrix of zeros
\ForAll {$v \in V$}
    	\State Perform BFS on $v$, obtaining a spanning tree $T$ rooted at $v$.
        \ForAll {$w \in V, w \neq v$ }
			\State 	$D(u,v) \gets $ distance of vertex $w$ to $v$ in $T$
        \EndFor
    \EndFor 
	\State \textbf{return} $D$
\end{algorithmic} 
\end{algo}
The distance between a set of vertices $K=\{v_1,\dots, v_k\}$ is then the minimum pairwise distance between distinct vertices $v_a$ and $v_b$, which we denote as
\begin{align}
d(K) = \min\{ d(v_a, v_b) : 1 \le a < b \le k \}.
\end{align}
The $j$-diameter of a graph $G=(V,E)$ has been defined \cite[Page 25, last equation]{CGY96} as the maximum distance of subsets $K$ with $(j+1)$ vertices, and we denote it as
\begin{align}
d_j(G) = \max\{d(K):K\subseteq V, |K|=j+1\}.
\end{align}

Now define $d_{j,k}$ to be the $j$-diameter of $G \sympow k$.
Whenever $d_{j,k} \ge 2$, we can obtain upper bounds on the eigenvalues of $L_k$ from graph-theoretic results of Ref.~\cite[Corollary 4.4]{CGY96}.
\begin{align} \label{eq:Lkupperbound}
 \lambda_j(L_k)
 \le 
  \lambda_{\rm max}(L_k)
 \left( 
 1 - 2 / \left(  1+ \binom n k^{1/(d_{j,k}-1)} \right)  
 \right).
\end{align}  Clearly $d_{j,k}$ decreases with increasing $j$, and thus our upper bounds on $\lambda_j(L_k)$ are increasing with $j$ as one would expect.
Now let us see how (\ref{eq:Lkupperbound}) can be tight. Let us consider a graph $G$ with $c$ connected components and consider $k=1$, so that $G \sympow 1 = G$. 
We claim that the $(c-1)$-diameter of $G$ is infinite. This is because we can pick a set of vertices, with one vertex from each connected component. Since none of these vertices are connected, their pairwise distance is always infinite. Using this value for the generalized diameter, the upper bound in (\ref{eq:Lkupperbound}) for $\lambda_{c-1}(L_1)$ becomes zero. Since we know from \cite[Lemma 13.1.1]{godsil2001algebraic} that $\lambda_{c-1}L(G)=0$, the upper bound in (\ref{eq:Lkupperbound}) is tight.

Since the $j$-diameter of $G \sympow k$ may be unwieldy to calculate directly,  we outline a polynomial time algorithm to obtain lower bounds on it.
At the heart of our algorithm is the fact that the distances between vertices in $G \sympow k$ can be computed using only information about the distances between vertices in $G$. 
This makes it possible to estimate the $j$-diameter of $G \sympow k$ solely by computing on the graph $G$.
Before diving into the specifics of our algorithm, we briefly outline its inner workings.
\begin{enumerate}
\item Pick any $j+1$ distinct vertices $X_1, \dots, X_{j+1}$ from $G \sympow k$. Note that each of these vertices are subsets of $V$, each with $k$ elements.
\begin{algo}
{$\texttt{SEL}_k(j,V)$, Select $j+1$ distinct vertices in $G \sympow k$ \label{alg:select}}.
\begin{algorithmic}[0] 
	\State $X_1 \gets$ a random $k$-vertex subset of $V$
    \State $c \gets 1$
    
	\While{$c \le j+1$}  
    	\State $Y \gets$ a random $k$-vertex subset of $V$
        \If{$Y \cup X_a \neq Y$ for all $a =1,\dots,c$}  
        	\State $X_{c+1} \gets Y$
            \State $c \gets c + 1$
        \EndIf
    \EndWhile 
	\State \textbf{return} $(X_1,\dots,X_{j+1})$
\end{algorithmic}
\end{algo}
\item Loop over all $a,b$ such that $1 \le a < b \le j+1$.
\item Compute $d(X_a,X_b)$.
\begin{algo}
{$\texttt{dist}(X,Y,D)$, Evaluates the distance between $X$ and $Y$ in $G \sympow k$ \label{alg:distXY}}.
\begin{algorithmic}[0] 
	\State $Z \gets X \cap Y$
    \State $a \gets |X|-|Z| \cap Y$    
    \State $X=\{x_1,\dots, x_a\} \gets X \setminus Z$
	\State $Y=\{x_1,\dots, y_a\} \gets Y \setminus Z$
    \State $C \gets $zeros($a$) \Comment initialize a size $a$ matrix
    \ForAll{$u =1,\dots, a$}
    	\ForAll{$v =1,\dots, a$}
    		\State $C(u,v) = D(x_u,x_v)$
        \EndFor
    \EndFor
    \State $d \gets$ output of Kuhn-Munkres algorithm on the cost matrix $C$
	\State \textbf{return} $d$
\end{algorithmic}
\end{algo}
\item Exit loop.
\item A lower bound for  $d_j(G \sympow k)$ is the minimum $d(X_a,X_b)$.
\end{enumerate}
This procedure can in principle be repeated for all possible choices of  $X_1, \dots, X_{j+1}$ to obtain the value of $d_j(G \sympow k)$ exactly. Since this may be computationally expensive, we propose just to randomly select the vertices $X_1, \dots, X_{j+1}$ a constant number of times.
Obviously the complexity of such an algorithm depends on the complexity of Step 3 of this procedure, where the  $d(X_a,X_b)$ is evaluated.

A direct attack on evaluating  $d(X_a,X_b)$ might seem to take time with complexity $O(k!)$ and hence not be polynomial in $n$.  This is because the distance between $X_a= \{x_1,\dots, x_k \}$ and  $X_b=\{y_1, \dots, y_k\} $ with respect to $G \sympow k$ is the sum of the distances with respect to $G$ between $x_j$ and $y_{\pi(j)}$, minimized over all permutations $\pi$ that permute $k$ symbols. There are then $k!$ possible permutations and $k$ distances to sum for each instance.  This however is not the case, since the problem of evaluating $d(X_a,X_b)$ is actually equivalent to the minimum assignment problem, which can be solved in $O(k^3)$ time using the celebrated Kuhn-Munkres algorithm \cite[Page 52]{schrijver2004combinatorial}, after one first computes all pairwise distances in $G$.

We now explain how combinatorial optimization algorithms from graph theory can be used to compute lower bounds on $d_{j,k}$ can be evaluated in polynomial time. 
\begin{enumerate}
\item Algorithm \ref{alg:dist} computes the all pairwise distances in $G$. This is achieved using breath-first-search on every vertex. Since  breadth-first search on any vertex produces a shortest path tree  in linear time \cite[Theorem 6.4]{schrijver2004combinatorial}, and there are $n$ such vertices,  Algorithm \ref{alg:dist} runs in $O(n^2)$ time.
\item Algorithm~\ref{alg:distXY} evaluates distances between given vertices in $G \sympow k$. It turns out that the evaluation of $d(X_a,X_b)$ is equivalent to the well-known minimum assignment problem in the field of combinatorial optimization. First, evaluate $Z=X_a \cap X_b$ and set $X = X_a \setminus Z$ and $Y = X_b \setminus Z$.  Consider a complete bipartite graph with every vertex in $x\in X$ is connected to a vertex in $y \in Y$ by a weighted edge. The weight of the edge $\{x,y\}$ in the bipartite graph is equal to the distance between $x$ and $y$ given by $d(x,y)$. The problem of computing $d(X_a,X_b)$ is then equivalent to finding the perfect matching (set of edges such that every vertex belongs to exactly one edge) on this bipartite graph, such that the sum of the weights on these matchings is minimized. But this is precisely equal to the minimum assignment problem, which can be solved using the Kuhn-Munkres algorithm. We therefore just need to generate the cost matrix for the minimum assignment problem in this algorithm to utilize the Kuhn-Munkres algorithm.
\end{enumerate}

We would be able to easily compute the generalized diameter of $G \sympow k$ exactly, if we only knew how to optimally select $j+1$ of its vertices in $G \sympow k$. 
Without such knowledge, we can use  Algorithm~\ref{alg:select} to randomly select $j+1$ vertices in $G \sympow k$.

We completely describe our algorithm to compute upper bounds on the eigenvalues of $L_k$ in Algorithm~\ref{alg:upper-bounds}.
\begin{algo}
{\texttt{Upp}$(j,k,G=(V,E))$, Upper bounds on $\lambda_j(L_k)$. \label{alg:upper-bounds}}
\begin{algorithmic}[0]

\State  \textbf{Initialization}
\State $1 \le k \le \frac n 2$.
\State $1 \le j < \binom n k $.
\State $\beta \gets$ maximum vertex degree of $G$.
\State $\mu \gets 2k \beta$ \Comment upper bound on $\lambda_{\rm max}(L_k)$
\State $D \gets \texttt{Dist}(G)$ \Comment From Algorithm \ref{alg:dist}
\State \textbf{end initialization}
\newline

\State $(X_1, \dots, X_{j+1}) \gets \texttt{SEL}_k(j,V)$ \Comment From Algorithm \ref{alg:select}
\State $d \gets \infty$
\ForAll {$a,b = 1,\dots,j+1 : a < b$}  
	\State $d_{X_a,X_b} \gets \texttt{Dist}(X_a,X_b,D)$ \Comment From Algorithm \ref{alg:distXY}
    \If {$d_{X_a,X_b}<d$}
    	\State $d \gets d_{X_a,X_b}$
    \EndIf
\EndFor
\newline
\If{$d\ge 2$}
	\State $u \gets \mu (1-2/(1+\binom n k ^{1/d}))$
\Else
	\State $u \gets \infty$
\EndIf 
\State \textbf{return} $u$
\end{algorithmic}
\end{algo}
Since there are $\binom{j+1}{2}$ possible pairwise distances amongst $X_1,\dots ,X_{j+1}$ that we must consider, 
the time complexity of running Algorithm~\ref{alg:upper-bounds} is 
\begin{align}
O(n^2)+O(j^2 k^3).
\end{align}
This thereby leads to an algorithm that evaluates a lower bound for $d_{j,k}$ in time polynomial in $n$, $j$ and $k$. 
This then leads to our formal result, which we give in the following theorem.
\begin{theorem}
\label{theorem:upper-bound-algo}
Let $G = (V,E)$ be any graph with $n$ vertices. Let $2 \le k \le n/2$ and $1\le j \le \binom n k - 1$. Then Algorithm \ref{alg:upper-bounds} can compute an upper bound on $\lambda_j(L_k)$ in $O(n^2)+O(j^2 k^3)$ time.
\end{theorem}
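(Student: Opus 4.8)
The plan is to verify the two assertions implicit in Theorem~\ref{theorem:upper-bound-algo}: that the value $u$ returned by Algorithm~\ref{alg:upper-bounds} is an upper bound on $\lambda_j(L_k)$, and that the algorithm runs in $O(n^2)+O(j^2k^3)$ time. For correctness, everything hinges on showing that the quantity $d$ produced by the inner loop of Algorithm~\ref{alg:upper-bounds} is a lower bound on $d_{j,k}$, the $j$-diameter of $G \sympow k$. Granting this, the bound follows from the chain
\begin{align}
\lambda_j(L_k)\ \le\ \lambda_{\rm max}(L_k)\left(1-\frac{2}{1+\binom{n}{k}^{1/(d_{j,k}-1)}}\right)\ \le\ 2k\beta\left(1-\frac{2}{1+\binom{n}{k}^{1/(d-1)}}\right)\ =\ u,
\end{align}
valid whenever $d\ge 2$. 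The first inequality is (\ref{eq:Lkupperbound}), which applies since $d_{j,k}\ge d\ge 2$. For the second, observe that under the hypothesis $2\le k\le n/2$ one has $\binom{n}{k}>1$, so the right-hand side of (\ref{eq:Lkupperbound}), regarded as a function of the diameter, is nonincreasing and nonnegative; hence replacing $d_{j,k}$ by the smaller value $d$ only enlarges it, and replacing $\lambda_{\rm max}(L_k)$ by the cruder estimate $2k\beta=\mu$ (from the two-sided bound at the start of Section~\ref{sec:computing-upperbounds}) only enlarges it further. When $d=\infty$ the right-hand side degenerates to $0$, consistent with the infinite-diameter discussion in the text, and when $d<2$ the algorithm returns $u=\infty$, a trivial upper bound.

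It remains to show $d\le d_{j,k}$. Algorithm~\ref{alg:select} returns $j+1$ pairwise distinct $k$-subsets $X_1,\dots,X_{j+1}$ of $V$ --- possible because $j\le\binom{n}{k}-1$ ensures $\binom{n}{k}\ge j+1$ --- which are $j+1$ distinct vertices of $G\sympow k$, and Algorithm~\ref{alg:dist} correctly tabulates all pairwise $G$-distances by breadth-first search from each vertex. The crucial sub-claim is that $\texttt{dist}(X_a,X_b,D)$ returns a value at most the distance $d(X_a,X_b)$ in $G\sympow k$. Indeed, writing $Z=X_a\cap X_b$, any walk in $G\sympow k$ from $X_a$ to $X_b$ relocates one ``token'' at a time along an edge of $G$; labelling the tokens by their starting positions in $X_a$ exhibits a bijection from $X_a$ to $X_b$ for which the walk's length is at least the total $G$-distance travelled by the tokens, hence at least the minimum of that total over all bijections $X_a\to X_b$; and a short triangle-inequality ``uncrossing'' argument shows this minimum equals the minimum-weight perfect matching between $X_a\setminus Z$ and $X_b\setminus Z$ with edge weights $d(\cdot,\cdot)$, which is exactly the output of the Kuhn--Munkres call. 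Consequently $d=\min_{a<b}\texttt{dist}(X_a,X_b,D)\le\min_{a<b}d(X_a,X_b)=d(\{X_1,\dots,X_{j+1}\})\le d_j(G\sympow k)=d_{j,k}$. (The reverse inequality holds as well, by routing the tokens one at a time and resolving blockages by reassigning which token targets which vertex, so $\texttt{dist}$ in fact computes $d(X_a,X_b)$ exactly; only the inequality above is needed here.)

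For the running time, Algorithm~\ref{alg:dist} runs $n$ breadth-first searches and costs $O(n^2)$, as noted in the preceding discussion. The loop of Algorithm~\ref{alg:upper-bounds} ranges over the $\binom{j+1}{2}=O(j^2)$ pairs $a<b$; each call to $\texttt{dist}$ builds a cost matrix of side at most $k$ in $O(k^2)$ time and then invokes the Kuhn--Munkres algorithm in $O(k^3)$ time \cite[Page 52]{schrijver2004combinatorial}, so the loop costs $O(j^2k^3)$; the initialization and the final arithmetic are of lower order. Summing gives $O(n^2)+O(j^2k^3)$.

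The step I expect to be the main obstacle is the reduction carried out inside $\texttt{dist}$ --- identifying (or at least lower-bounding) the symmetric-power distance $d(X_a,X_b)$ by the minimum-assignment value between the two halves of the symmetric difference. The lower-bound direction rests on the ``uncrossing'' fact that self-pairing the common vertices in $X_a\cap X_b$ never increases the matching cost, and the exact-equality direction (not strictly needed here, but needed if one wants $d$ to converge to $d_{j,k}$ under exhaustive selection) requires a careful argument that token blockages can always be cleared without spending extra moves. The remaining ingredients --- the monotonicity of the Chung--Grigoryan--Yau expression, the degree bound on $\lambda_{\rm max}(L_k)$, the correctness of breadth-first search, and the counting of loop iterations --- are routine.
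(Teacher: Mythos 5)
Your proposal is correct and follows essentially the same route as the paper: BFS for all pairwise distances in $G$, reduction of the $G\sympow k$ distance to a minimum assignment solved by Kuhn--Munkres, the sampled minimum pairwise distance as a lower bound on $d_{j,k}$, and monotonicity of the bound (\ref{eq:Lkupperbound}) together with $\lambda_{\rm max}(L_k)\le 2k\beta$, plus the $O(n^2)+O(j^2k^3)$ operation count. You in fact supply details the paper leaves implicit (the uncrossing argument justifying the assignment reduction and the monotonicity check), but the underlying argument is the same.
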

Thus for all $k$ and $j$ polynomial in $n$, upper bounds on the eigenvalues of the ferromagnetic HH can be computed in time polynomial in $n$.
Such an algorithm would outperform a direct solver for Laplacians \cite{spielman2014nearly} whenever $k \ge 3$.

\section{Lower bounds for the Heisenberg spectrum}
\label{sec:computing-lowerbounds}
A property of graphs that we focus on are their associated isoperimetric inequalities.
These isoperimetric inequalities on graphs allow us to define the notion of the isoperimetric dimension of a graph.
Now let $X$ be a set of vertices and $\partial X$ be its boundary.
In this case, the edge boundary of $X$ is just the set of edges in $E$ with exactly one vertex in $X$ and one vertex in $V \setminus X$.
Then the edge-isoperimetric inequality on graphs \cite{Alon1986} 
is any lower bound of the form
\begin{align}
|\partial X| \ge c |X|^{1-1/d}
\end{align}
that holds for every vertex subset $X$ of size at most half the cardinality of $V$.
The utility of these isoperimetric inequalities in the case of continuous manifolds lies in their applicability for example to give bounds on the principal frequency of a vibrating membrane \cite{payne1967isoperimetric}. 
The rationale behind seeking edge-isoperimetric inequalities for the graphs $G$ 
lies in the fact that such inequalities can yield spectral bounds on the eigenvalues of the normalized Laplacians of $G$ \cite{ChY95}, and hence also of the Laplacians.
Since the Heisenberg Hamiltonian is just a direct sum of Laplacians of $G \sympow k$, edge-isoperimetric inequalities on $G \sympow k$ can then yield bounds on the corresponding energy eigenvalues of the Heisenberg Hamiltonian. 

In this section, we prove several technical results relating to the edge-isoperimetric inequalities on the symmetric products of graphs.
Roughly speaking, our results allow us to establish the isoperimetric properties of $G \sympow k$ in terms of the isoperimetric properties of certain subgraphs of the graph $G$.
In particular, these subgraphs are vertex induced subgraphs of $G$ where a number of vertices and their corresponding edges are deleted from $G$. 
Our technical result applies to graphs with a finite number of vertices.
In Theorem \ref{thm:symprod-seminorm}, we prove that that if deleting any $k-1$ vertices from a finite graph $G$ yields a vertex induced subgraph that has a dimension $\delta$ with isoperimetric number $C$,
then a lower bound on the size of the edge-boundary of a subset of vertices $\Omega$ in $G \sympow k$ is given in terms of the size of the edge-boundary of $\Omega$ in the Johnson graph that is the $k$-th symmetric product of the complete graph. 

The proof relies crucially on the fact that the size of an edge boundary of a set $X$ can be written as a Sobolev seminorm of the indicator function of $X$. 
This implies that edge-isoperimetric inequalities can be written in terms of 
the Sobolev seminorm of an indicator function and an appropriate functional of that indicator function, 
as we shall see in Section \ref{sec:sobolev}.
Also, we use Tillich's observation of a one-to-one correspondence between edge-isoperimetric inequalities and inequalities relating the Sobolev seminorm of functions and an appropriate functional evaluated on those functions \cite{Til00}.
Together, these insights allow us to obtain lower bounds on the size of the edge-boundary of the subsets of vertices in $G \sympow k$. 

\subsection{Sobolev inequalities on graphs}
\label{sec:sobolev}
Recall that an edge-isoperimetric inequality for a graph $G=(V,E)$ has the form
\begin{align}
|\partial X| \ge C |X|^{1-1/d}, \quad \forall X\subseteq V: |X|= k,
\end{align}
where $k=1,\dots , |V|/2$. The point of this section is that the size of the edge-boundary $|\partial X|$  can be written in terms of a discrete Sobolev seminorm, and this allows us to obtain some interesting insights.
Namely, given a graph $G=(V,E)$ and a function $f : V \to \mathbb R$ on the vertex set, the discrete Sobolev seminorm of $f$
corresponding to the edge set $E$ is defined by
\begin{align}
\| f \|_{E} = \sum_{ \{ u,v \} \in E } |f(u) - f(v)| . \notag
\end{align}
Now consider the case where $f={\bf 1}_X$ where 
${\bf 1}_X : V \to \{0,1\}$ is an indicator function on $X$ so that for all $X \subseteq V$, ${\bf 1}_X(x) = 1$ if $x \in X$ and ${\bf 1}_X(x) = 0$ if $x \in V \setminus X$.
Then it is clear that 
\begin{align}
|\partial X| = \| {\bf 1}_{X} \|_E. 
\end{align}
We call any inequality which involves the Sobolev seminorm $\|\cdot \|_E$, such as the one above, a discrete Sobolev inequality. 

The analytic inequalities of Tillich \cite[Theorem 2]{Til00} establish the equivalence between edge-isoperimetric inequalities and discrete Sobolev inequalities on functionals 
that map functions from $\Phi_V$ to non-negative real numbers,
where $\Phi_V$ denotes the set of all functions $f : V \to \mathbb R$.
To state Tillich's theorem succinctly, we introduce the following definition.
\begin{definition}\label{def:1}
Given $C>0$ and a functional $\rho : \Phi_V \to \mathbb R^+$, we say that $G$ is $(C,\rho)$-isoperimetric if for every $X \subseteq V$, we have $\| {\bf 1}_X \|_E \ge C \rho( {\bf 1}_X ).$
\end{definition}
By not requiring that $|X|\le |V|/2$, an implicit constraint on the choice of feasible functionals $\rho$ that can satisfy the discrete Sobolev inequality in Definition \ref{def:1} is imposed.  

 We state Tillich's result on functionals that are also seminorms in the following theorem.
\begin{theorem}[{\cite[Theorem 2]{Til00}}]
\label{thm:tillich}
Let $G =(V,E)$ be a graph, $C > 0$, and $\rho$ be a seminorm on $\Phi_V$.
Then $G$ is $(C,\rho)$-isoperimetric if and only if $\|f \|_E \ge C \rho( f )$ for every function $f : V \to \mathbb R$. 
\end{theorem}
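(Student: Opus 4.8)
The plan is to prove both implications of the biconditional, with the "if" direction being immediate and the "only if" direction being the substantive one. For the easy direction, observe that if $\|f\|_E \ge C\rho(f)$ holds for \emph{every} $f:V\to\mathbb{R}$, then in particular it holds for $f = {\bf 1}_X$ for every $X\subseteq V$, which is exactly the statement that $G$ is $(C,\rho)$-isoperimetric by Definition~\ref{def:1}. So the whole content is the forward direction: assuming the Sobolev inequality holds merely on indicator functions, we must bootstrap it to all real-valued functions.

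For the forward direction, the key idea is the \emph{layer-cake} (co-area) decomposition of an arbitrary function into a combination of indicator functions of its superlevel sets. First I would handle the case of a nonnegative function $f:V\to\mathbb{R}^{+}$ taking finitely many values (which is automatic since $V$ is finite), writing $f = \int_0^\infty {\bf 1}_{\{f > t\}}\, dt$, where $\{f>t\} = \{v\in V : f(v) > t\}$. The two quantities of interest both behave linearly under this decomposition in the right way: on one hand, $\|f\|_E = \sum_{\{u,v\}\in E}|f(u)-f(v)| = \int_0^\infty \sum_{\{u,v\}\in E} |{\bf 1}_{\{f>t\}}(u) - {\bf 1}_{\{f>t\}}(v)|\, dt = \int_0^\infty \|{\bf 1}_{\{f>t\}}\|_E\, dt$, where the interchange of sum and integral is valid because there are finitely many edges and each integrand is a step function in $t$; the pointwise identity $|f(u)-f(v)| = \int_0^\infty |{\bf 1}_{\{f>t\}}(u)-{\bf 1}_{\{f>t\}}(v)|\,dt$ is just the statement that $|a-b|$ equals the measure of the symmetric difference of $[0,a)$ and $[0,b)$. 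On the other hand, since $\rho$ is a seminorm, it satisfies the triangle inequality and absolute homogeneity, and one shows $\rho(f) \le \int_0^\infty \rho({\bf 1}_{\{f>t\}})\, dt$ by a Riemann-sum / subadditivity argument: approximate the integral $f = \int_0^\infty {\bf 1}_{\{f>t\}}\,dt$ by finite sums $\sum_i (t_{i+1}-t_i){\bf 1}_{\{f>t_i\}}$ over the finitely many distinct values of $f$, apply subadditivity and homogeneity of $\rho$ to each sum, and pass to the limit. Combining, $\|f\|_E = \int_0^\infty \|{\bf 1}_{\{f>t\}}\|_E\, dt \ge \int_0^\infty C\rho({\bf 1}_{\{f>t\}})\,dt \ge C\rho(f)$, using the hypothesis on each superlevel set.

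The remaining step is to remove the nonnegativity assumption. For a general $f:V\to\mathbb{R}$, shifting by a constant does not obviously help because $\rho$, being only a seminorm, need not vanish on constants; instead I would split $f = f^{+} - f^{-}$ into its positive and negative parts, apply the nonnegative case to each of $f^{+}$ and $f^{-}$, and combine using the seminorm triangle inequality $\rho(f) \le \rho(f^{+}) + \rho(f^{-})$ together with the fact that $\|f\|_E \ge \|f^{+}\|_E + \|f^{-}\|_E$ — the latter holds edge by edge because for each edge $\{u,v\}$ one has $|f(u)-f(v)| \ge |f^{+}(u)-f^{+}(v)| + |f^{-}(u)-f^{-}(v)|$ (check the cases according to the signs of $f(u)$ and $f(v)$: when both have the same sign one side is zero, and when they have opposite signs equality holds). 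This yields $\|f\|_E \ge \|f^{+}\|_E + \|f^{-}\|_E \ge C\rho(f^{+}) + C\rho(f^{-}) \ge C\rho(f)$, completing the proof.

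The main obstacle is the inequality $\rho(f) \le \int_0^\infty \rho({\bf 1}_{\{f>t\}})\,dt$ for the seminorm $\rho$: unlike $\|\cdot\|_E$, which decomposes with exact equality because it is built pointwise from $|f(u)-f(v)|$, a general seminorm only gives subadditivity, so one must be careful that the Riemann-sum approximation of the layer-cake integral converges in the right direction and that the limiting argument is legitimate — here finiteness of $V$ (hence of the value set of $f$) is what makes everything a finite sum and sidesteps any measure-theoretic subtlety. The other place to be slightly careful is the handling of constants / the sign split, since we cannot assume $\rho$ annihilates constant functions.
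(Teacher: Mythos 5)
Your proof is correct. Note, however, that the paper itself does not prove this statement: it is imported verbatim as \cite[Theorem 2]{Til00}, so there is no in-paper argument to compare against. Your layer-cake (co-area) decomposition of a nonnegative $f$ into superlevel-set indicators, with exact additivity for $\|\cdot\|_E$ and subadditivity plus homogeneity for the seminorm $\rho$, is the standard route and is essentially Tillich's own argument; finiteness of $V$ indeed reduces everything to finite sums, as you observe. One small remark on your final step: the hypothesis applied to $X=V$ gives $0=\|{\bf 1}_V\|_E\ge C\rho({\bf 1}_V)$, so $\rho$ is in fact forced to vanish on constant functions, and one could therefore simply replace $f$ by $f-(\min f){\bf 1}_V$ and invoke the nonnegative case directly; your split $f=f^{+}-f^{-}$ with the edgewise identity $|f(u)-f(v)|=|f^{+}(u)-f^{+}(v)|+|f^{-}(u)-f^{-}(v)|$ works equally well and avoids having to make that observation.
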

Imposing the additional constraint $|X|\le V/2$ would allow ourselves to work with a larger family of seminorms $\rho$, 
but Theorem \ref{thm:tillich} would need appropriate modification, which we do not address in this paper.
Working without the constraint $|X|\le V/2$ allowed Tillich to derive edge-isoperimetric inequalities for graphs with a countably infinite number of vertices.

In this article, we restrict our attention to the functionals $g_p$ and $\rho_p$ for $p \ge 1$, where
\begin{align} 
 g_p(f)  &= \left(  \frac{1}{|V|} \sum_{x,y \in V } |f(x) - f(y)|^p  \right)^{1/p}, \\
\rho_p( f )    &= \left( \sum_{x \in V} |f(x) - \mathbb E(f) | ^{p} \right) ^{1/p} , 
\end{align}
where 
\begin{align}
\mathbb E(f) =  \frac{1}{|V| }\sum_{v \in V } f(v)
\end{align}
denotes the expectation value of $f$.
It is then easy to show that 
\begin{align} 
g_p({\bf 1}_X)  &= \left( \frac{2 |X| |V \setminus X|}{|V|}\right)^{1/p}, \label{eq:gpform}\\
\rho_p (   {\bf 1}_X ) &=
 \left( \sum_{x \in V} \left| {\bf 1}_X(x) - \frac{|X|}{|V|} \right| ^{p} \right) ^{1/p}.
\end{align}
Note that when $g_p$ and $\rho_p$ are evaluated on ${\bf 1}_X$, 
they are invariant under the substitution of $X$ with $V \setminus X$.

 The discrete Sobolev inequality is closely related to the isoperimetric number and dimension of a graph as given in the following proposition, which is obvious from definitions. 
\begin{proposition}
Let $G=(V,E)$ be graph and $C>0$ and $\delta>1$. Then the following are true.
\begin{enumerate} 
\item If $V$ is finite and $G$ is $(C,g_{\delta/(\delta-1)})$-isoperimetric, then $G$ has an dimension of $\delta$ with isoperimetric number $C$.
\item If $V$ is finite and $G$ has dimension $\delta$ with isoperimetric number $C$, then $G$ is $(2^{-\delta/(\delta-1)}C,g_{\delta/(\delta-1)})$-isoperimetric.
\end{enumerate}
\end{proposition}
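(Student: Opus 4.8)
The plan is to unwind both directions directly from the definitions, using the explicit formulas for $g_p$ on indicator functions given in Eq.~(\ref{eq:gpform}). Throughout set $p = \delta/(\delta-1)$, so that $1 - 1/\delta = 1/p$ and $g_p({\bf 1}_X) = (2|X|\,|V\setminus X|/|V|)^{1/p}$. The only substantive point is to compare the quantity $|X|^{1-1/\delta} = |X|^{1/p}$ that appears in the definition of ``dimension $\delta$ with isoperimetric number $C$'' against $g_p({\bf 1}_X)$, over the regime $|X| \le |V|/2$ (recall that the dimension condition only quantifies over sets of size at most $n/2$, whereas $(C,g_p)$-isoperimetric quantifies over all $X \subseteq V$ via Definition \ref{def:1}).

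For part (1), assume $G$ is $(C,g_p)$-isoperimetric. Fix $X$ with $|X| = k \le |V|/2$. By hypothesis and Theorem \ref{thm:tillich} (or just directly from Definition \ref{def:1}, since we only need the indicator-function case), $|\partial X| = \|{\bf 1}_X\|_E \ge C\,g_p({\bf 1}_X) = C\,(2k|V\setminus X|/|V|)^{1/p}$. Since $k \le |V|/2$ we have $|V \setminus X| = |V| - k \ge |V|/2 \ge k$, hence $2k|V\setminus X|/|V| \ge 2k\cdot(|V|/2)/|V| = k$, so $|\partial X| \ge C k^{1/p} = C k^{1-1/\delta}$. As this holds for every $k$-set with $k \le |V|/2$, the graph $G$ has dimension $\delta$ with isoperimetric number $C$ by the definition in Section \ref{sec:graphs-intro}.

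For part (2), assume $G$ has dimension $\delta$ with isoperimetric number $C$. We must verify $\|{\bf 1}_X\|_E \ge 2^{-1/p}C\,g_p({\bf 1}_X)$ for \emph{every} $X \subseteq V$ (not just $|X| \le |V|/2$); since $\|{\bf 1}_X\|_E = |\partial X| = |\partial(V\setminus X)|$ and $g_p({\bf 1}_X) = g_p({\bf 1}_{V\setminus X})$ are both symmetric under $X \leftrightarrow V\setminus X$, it suffices to treat $|X| = k \le |V|/2$. Write $m = |V\setminus X| = |V| - k$, so $k \le m \le |V|$ and thus $m/|V| \le 1$, giving $2km/|V| \le 2k$, i.e.\ $g_p({\bf 1}_X) = (2km/|V|)^{1/p} \le (2k)^{1/p} = 2^{1/p} k^{1/p}$. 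Combining with the isoperimetric bound $|\partial X| \ge C k^{1-1/\delta} = C k^{1/p}$ yields $\|{\bf 1}_X\|_E = |\partial X| \ge C k^{1/p} \ge C \cdot 2^{-1/p} g_p({\bf 1}_X)$, which is exactly the $(2^{-\delta/(\delta-1)}C, g_p)$-isoperimetric condition (here $2^{-1/p} = 2^{-\delta/(\delta-1)}$).

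There is no real obstacle here --- as the paper itself says, the proposition is ``obvious from definitions'' --- so the only thing to be careful about is the mismatch in quantification ranges: the dimension condition is a statement about small sets ($|X| \le n/2$), while $(C,\rho)$-isoperimetric is a statement about all sets, and the gap is bridged precisely by the $X \leftrightarrow V \setminus X$ symmetry of $g_p({\bf 1}_X)$ together with the two-sided estimate $k \le 2km/|V| \le 2k$ valid when $k \le m$. One could also phrase both directions uniformly by noting $\min(|X|,|V\setminus X|) \le 2|X|\,|V\setminus X|/|V| \le 2\min(|X|,|V\setminus X|)$, which is the single inequality doing all the work.
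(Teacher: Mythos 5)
Your argument is correct and is exactly the ``obvious from definitions'' verification the paper has in mind (the paper offers no written proof), hinging on the two-sided estimate $\min(|X|,|V\setminus X|)\le 2|X||V\setminus X|/|V|\le 2\min(|X|,|V\setminus X|)$ and the complement symmetry of $g_p({\bf 1}_X)$ to bridge the mismatch in quantification ranges. One small slip: in part (2) your parenthetical identification $2^{-1/p}=2^{-\delta/(\delta-1)}$ is false, since $1/p=(\delta-1)/\delta$ and not $\delta/(\delta-1)$; what you have actually proved is that $G$ is $(2^{-(\delta-1)/\delta}C,\,g_{\delta/(\delta-1)})$-isoperimetric. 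This is harmless --- indeed it is \emph{stronger} than the stated claim, because $2^{-(\delta-1)/\delta}\ge 2^{-\delta/(\delta-1)}$ for $\delta>1$, so the proposition as written follows a fortiori --- but you should either correct the exponent and note the implication, or state the sharper constant you obtained.
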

Hence we can address finite-sized graphs with the functionals $\rho_p$ using the two-sided bounds on $\rho_p({\bf 1}_X)$ in terms of $g_p({\bf 1}_X)$ as given in the following lemma.
Note that when $p=1$, we get $\rho_1 (   {\bf 1}_X ) = g_1({\bf 1}_X)$ for any vertex subset $X$.
\begin{lemma}
\label{lem:rho_p}
Let $G = (V,E) $ be a graph, $X \subseteq V$ and $p \ge 1$. Then
\begin{align}
\frac{1}{2^{1-1/p}} g_p({\bf 1}_X)
\le 
\rho_p (   {\bf 1}_X ) 
\le 
g_p({\bf 1}_X)
. \notag
\end{align}
\end{lemma}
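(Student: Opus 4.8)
The plan is to reduce the claim to a one‑variable estimate by evaluating both seminorms explicitly on ${\bf 1}_X$. First I would dispose of the trivial cases $X=\emptyset$ and $X=V$, in which $g_p({\bf 1}_X)=\rho_p({\bf 1}_X)=0$ and the asserted chain of inequalities holds with equality. So assume $1\le |X|\le |V|-1$, write $n=|V|$, $k=|X|$, and set $t=k/n\in(0,1)$.

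Next I would split the sum defining $\rho_p({\bf 1}_X)$ according to whether $x\in X$ or $x\in V\setminus X$: since ${\bf 1}_X(x)-k/n$ equals $1-t$ on $X$ and $-t$ on its complement, this yields $\rho_p({\bf 1}_X)^p = k(1-t)^p+(n-k)t^p = \tfrac{k(n-k)}{n}\bigl(t^{p-1}+(1-t)^{p-1}\bigr)$. Combining with $g_p({\bf 1}_X)^p=\tfrac{2k(n-k)}{n}$ from (\ref{eq:gpform}), the common factor $k(n-k)/n$ cancels and one is left with the clean identity
\[
\left(\frac{\rho_p({\bf 1}_X)}{g_p({\bf 1}_X)}\right)^{p}=\frac{t^{p-1}+(1-t)^{p-1}}{2}.
\]
Hence, after taking $p$‑th roots, the lemma is equivalent to proving the two‑sided bound $2^{2-p}\le t^{p-1}+(1-t)^{p-1}\le 2$ for all $t\in(0,1)$.

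The upper bound is immediate: $t,1-t\in[0,1]$ and $p-1\ge 0$ force $t^{p-1}\le 1$ and $(1-t)^{p-1}\le 1$, giving $\rho_p({\bf 1}_X)\le g_p({\bf 1}_X)$. For the lower bound I would study $\phi(t)=t^{p-1}+(1-t)^{p-1}$, which is symmetric about $t=\tfrac12$. When $p\ge 2$ the map $s\mapsto s^{p-1}$ is convex on $[0,1]$, so $\phi$ is convex and attains its minimum at $t=\tfrac12$, where $\phi(\tfrac12)=2\cdot 2^{-(p-1)}=2^{2-p}$; after taking $p$‑th roots this is exactly the claimed bound $\rho_p({\bf 1}_X)\ge 2^{-(1-1/p)}g_p({\bf 1}_X)$. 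I expect the lower bound in the complementary range $1\le p<2$ to be the main obstacle: there $s\mapsto s^{p-1}$ is concave, so $\phi$ is minimized at the endpoints rather than the centre, and the direct convexity argument only delivers the weaker estimate $\phi(t)\ge 1$, i.e.\ $\rho_p({\bf 1}_X)\ge 2^{-1/p}g_p({\bf 1}_X)$. Closing the gap to the constant $2^{-(1-1/p)}$ — for instance by exploiting that $|X|\ge 1$ keeps $t$ bounded away from $0$ and $1$ by $1/n$ — is the step I would spend the most effort on; the reduction to one variable and the upper bound are routine.
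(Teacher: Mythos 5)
Your reduction to the one-variable inequality $2^{2-p}\le t^{p-1}+(1-t)^{p-1}\le 2$ with $t=|X|/|V|$ is correct, and it is essentially the same computation the paper performs (splitting the sum defining $\rho_p({\bf 1}_X)$ over $X$ and $V\setminus X$). Your upper bound and your convexity argument for $p\ge 2$ are complete and give exactly the stated constant. The gap you flag for $1<p<2$, however, is not closable: the stated lower bound is simply false in that range, so no further work on $\phi(t)=t^{p-1}+(1-t)^{p-1}$ will recover it. Take $p=3/2$, $|V|=n$, $|X|=1$, so $t=1/n$. Then
\begin{align}
\rho_p({\bf 1}_X)^p=\Bigl(1-\tfrac1n\Bigr)^{3/2}+(n-1)\,n^{-3/2}\longrightarrow 1,
\qquad
2^{-(p-1)}g_p({\bf 1}_X)^p=2^{-1/2}\cdot\frac{2(n-1)}{n}\longrightarrow \sqrt{2},
\notag
\end{align}
so the claimed inequality $\rho_p({\bf 1}_X)^p\ge 2^{1-p}g_p({\bf 1}_X)^p$ fails for all large $n$ (already at $n=100$ the two sides are about $1.084$ and $1.400$). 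Your fallback estimate $\phi(t)\ge 1$, i.e.\ $\rho_p({\bf 1}_X)\ge 2^{-1/p}g_p({\bf 1}_X)$, is in fact the sharp uniform bound for $1\le p\le 2$ (by concavity the infimum of $\phi$ over $(0,1)$ is the endpoint limit $1$), and your idea of exploiting $|X|\ge 1$ to keep $t\ge 1/n$ cannot help, because the constant in the lemma must be independent of $n$ while the counterexamples have $t=1/n\to 0$.

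For what it is worth, the paper's own proof stumbles at exactly the step you identified: it asserts that \emph{both} $|X|\bigl(1-\tfrac{|X|}{|V|}\bigr)^p$ and $|V\setminus X|\bigl(\tfrac{|X|}{|V|}\bigr)^p$ are at least $\tfrac{|X|\,|V\setminus X|}{|V|}\bigl(\tfrac12\bigr)^{p-1}$. For $p>1$ the first of these holds only when $|X|\le |V|/2$ and the second only when $|X|\ge |V|/2$, so the two claims are never simultaneously true except at $|X|=|V|/2$. The correct statement is $\rho_p({\bf 1}_X)\ge 2^{-\max(1/p,\,1-1/p)}\,g_p({\bf 1}_X)$, which agrees with the lemma only for $p\ge 2$; since the paper later applies the lemma with $p=\delta/(\delta-1)\in(1,2)$ for $\delta>2$, the constant there should be $2^{-1/p}$ rather than $2^{-(1-1/p)}$. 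In short, your proof is correct where it is complete, and the step you could not finish is a defect of the statement, not of your argument.
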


\begin{proof}
By definition, 
$\rho_p (   {\bf 1}_X ) =
 \left( \sum_{x \in V} \left| {\bf 1}_X(x) - \frac{|X|}{|V|} \right| ^{p} \right) ^{1/p}$. 
Splitting the summation over $V$ into the disjoint subsets $X$ and $V \setminus X$ yields
\begin{align}
\rho_p (   {\bf 1}_X )
&=  
 \left( 
 	|X| \left( 1 - \frac{|X|}{|V|} \right) ^p 
 +
   (|V| - |X| ) \left( \frac{|X|}{|V|}\right)^p
 \right) ^{1/p}.
\end{align}
Since $ \left( 1 - \frac{|X|}{|V|} \right) ^p \le  \left( 1 - \frac{|X|}{|V|} \right)$ and $ \left( \frac{|X|}{|V|}\right)^p \le  \left( \frac{|X|}{|V|}\right)$ for $p\ge 1$, we get $\rho_p (   {\bf 1}_X ) \le g_p({\bf 1}_X)$.
Since both 
  $ |X| \left( 1 - \frac{|X|}{|V|} \right) ^p  $ and 
  $ |V \setminus X| \left( \frac{|X|}{|V|}\right)^p$ 
  are at least $ \left( \frac{|X| |V \setminus X| }{|V|}\right)
  (\frac 1 2 )^{p-1},$
  we get $\rho_{p}({\bf 1}_{X}) \ge g_p({\bf 1}_X) (1/2^{p-1})^{1/p}$.
\end{proof}
We remark that Lemma \ref{lem:rho_p} is tight when $p=1$, because then we would have
\begin{align}
 g_1({\bf 1}_X)
\le 
\rho_1 (   {\bf 1}_X ) 
\le 
g_1({\bf 1}_X),
\end{align}
which implies that $\rho_1 (   {\bf 1}_X )  = g_1 (   {\bf 1}_X ) $. 
The scenario $p=1$ occurs for graphs with infinite dimensions, and expander graphs are examples of such graphs.

Lemma \ref{lem:rho_p} implies the following for $C>0$ and $\delta > 1$.
\begin{enumerate}
\item If a graph is $(C, \rho_{\delta/(\delta-1)})$-isoperimetric, 
the graph also has dimension $\delta$ with isoperimetric number $2^{-\delta} C $.
\item If a graph has dimension $\delta$ with isoperimetric number $C$,
the graph is also $(2^{-\delta/(\delta-1)}C,g_{\delta/(\delta-1)})$-isoperimetric.
\end{enumerate}
In what follows, we use Theorem \ref{thm:tillich} where $\rho =\rho_p$ for $p \ge 1$.   

\subsection{The symmetric product of finite graphs}
Now we address the edge-isoperimetric problem on the graph $G^{\{k\}}$ when $G$ has a finite number of vertices, for a fixed positive integer $k = 2, \dots, \lfloor |V|/ 2 \rfloor$. 
Again we rely on the edge-isoperimetric properties of the vertex-induced subgraphs of a graph $G$. 
A key ingredient of our proof is a bijection between sets, described by the following proposition.
\begin{proposition}
\label{prop:same-cardinality}
Let $V$ be a countable set and $k$ be a integer such that $k = 1,\dots, |V|$.
Then the sets 
$\mathcal A = \{ (W ;x) : W \subseteq V , |W| = k-1 , x \in V \setminus W\}$ 
and
$\mathcal A' = \{ (X ; x) : X \subseteq V , |X| = k , x \in X\}$
have the same cardinality.
\end{proposition}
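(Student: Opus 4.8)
The plan is to exhibit an explicit bijection between $\mathcal A$ and $\mathcal A'$ given by adjoining the distinguished element to the set. Concretely, I would define $\phi : \mathcal A \to \mathcal A'$ by $\phi(W;x) = (W \cup \{x\};\, x)$. The first step is to check that $\phi$ is well-defined: since $x \in V \setminus W$ we have $x \notin W$, so $|W \cup \{x\}| = |W| + 1 = k$, and trivially $x \in W \cup \{x\}$; hence $(W \cup \{x\};\, x) \in \mathcal A'$.

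Next I would introduce the candidate inverse $\psi : \mathcal A' \to \mathcal A$ defined by $\psi(X;x) = (X \setminus \{x\};\, x)$, and verify that it too is well-defined: since $x \in X$, we get $|X \setminus \{x\}| = |X| - 1 = k-1$, and $x \notin X \setminus \{x\}$ so $x \in V \setminus (X \setminus \{x\})$; hence $(X \setminus \{x\};\, x) \in \mathcal A$.

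Then I would confirm that $\psi \circ \phi = \mathrm{id}_{\mathcal A}$ and $\phi \circ \psi = \mathrm{id}_{\mathcal A'}$. For $(W;x) \in \mathcal A$, since $x \notin W$ we have $(W \cup \{x\}) \setminus \{x\} = W$, so $\psi(\phi(W;x)) = (W;x)$. For $(X;x) \in \mathcal A'$, since $x \in X$ we have $(X \setminus \{x\}) \cup \{x\} = X$, so $\phi(\psi(X;x)) = (X;x)$. Therefore $\phi$ is a bijection, whence $|\mathcal A| = |\mathcal A'|$; in the finite case both cardinalities equal $\binom{|V|}{k-1}(|V|-k+1) = k\binom{|V|}{k}$, and when $V$ is countably infinite both sets are countably infinite, which covers the remaining case.

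I do not anticipate a genuine obstacle here: the statement is a routine double-counting identity, and the only point requiring (minor) care is checking that the two maps land in the correct target sets, which is precisely where the cardinality constraints $|W| = k-1$ and $|X| = k$ are used together with the membership constraints $x \in V \setminus W$ and $x \in X$.
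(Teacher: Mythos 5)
Your proof is correct and uses exactly the same bijection $(W;x)\mapsto(W\cup\{x\};x)$ as the paper; the paper simply asserts invertibility, whereas you spell out the inverse and the well-definedness checks. Nothing further is needed.
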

\begin{proof}
Let $f : \mathcal A \to \mathcal A' $ where $f \mapsto (W ; x)  =  (W \cup \{x\} ; x)$
for all $W \subseteq V$ and $x \in V \setminus W$. 
The map $f$ is invertible, and is therefore a bijection from $\mathcal A$ to $\mathcal A'$. 
Hence $\mathcal A$ and $\mathcal A'$ have the same cardinality. 
\end{proof}

 We obtain here a lower bound on $|\partial \Omega|$, which is the size of the edge boundary of any vertex subset $\Omega$ in $G^{\{k\}}$.
 Our lower bound on $|\partial \Omega|$ is provided in terms of
$|\partial_J\Omega|$, which is the size of the edge boundary of $\Omega$ in the Johnson graph $J(n,k)$.
\begin{theorem}
\label{thm:symprod-seminorm}
Let $G=(V,E)$ be a graph with $n$ vertices, and let $p \ge 1$ and $C >0$.
Suppose that every vertex-induced subgraph of $G$ with $n-k+1$ vertices is $(C,\rho_p)$-isoperimetric.
Then for every $\Omega \subseteq V^{\{k\}}$,
\begin{align}
  |\partial \Omega|  \ge \frac{C}{n-k+1} (2 |\partial_J \Omega |)^{1/p}. \notag
\end{align}
\end{theorem}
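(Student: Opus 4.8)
The plan is to reduce the edge-isoperimetric question on $G^{\{k\}}$ to the edge-isoperimetric behaviour of the $(n-k+1)$-vertex induced subgraphs of $G$, by slicing $\Omega$ along a common ``core'' set. Write $N:=n-k+1$. For each $(k-1)$-subset $W\subseteq V$ let $G_W:=G[V\setminus W]$ be the subgraph obtained by deleting $W$, and introduce the slice $S_W:=\{x\in V\setminus W:\,W\cup\{x\}\in\Omega\}\subseteq V\setminus W$.

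First I would establish the decomposition of the boundary. Every edge $\{X,Y\}$ of $G^{\{k\}}$ has a unique core $W=X\cap Y$ (a $(k-1)$-set) and symmetric difference $X\triangle Y=\{x,y\}\in E$ with $x,y\in V\setminus W$, and conversely every $(k-1)$-set $W$ together with an edge of $G_W$ yields such an edge; moreover $\{W\cup\{x\},W\cup\{y\}\}$ lies in $\partial\Omega$ iff exactly one of $x,y$ lies in $S_W$, i.e.\ iff $\{x,y\}$ lies in the boundary of $S_W$ inside $G_W$. (The bookkeeping here is an edge-version of the bijection in Proposition~\ref{prop:same-cardinality}, which lets one replace a sum over pairs ``$k$-set $X$, vertex $x\in X$'' by a sum over pairs ``$(k-1)$-set $W$, vertex $x\notin W$''.) This gives
\begin{align}
|\partial\Omega|=\sum_{W\subseteq V:\,|W|=k-1}\|{\bf 1}_{S_W}\|_{E(G_W)},\qquad |\partial_J\Omega|=\sum_{W\subseteq V:\,|W|=k-1}|S_W|\bigl(N-|S_W|\bigr), \notag
\end{align}
the second identity following the same way after replacing $G_W$ by the complete graph on $V\setminus W$ and noting that the edge boundary of a size-$s$ set in a complete graph on $N$ vertices has exactly $s(N-s)$ edges.

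Next I would apply the hypothesis termwise. Each $G_W$ has exactly $N$ vertices, so it is $(C,\rho_p)$-isoperimetric, giving $\|{\bf 1}_{S_W}\|_{E(G_W)}\ge C\,\rho_p({\bf 1}_{S_W})$ with $\rho_p$ taken over the $N$-element set $V\setminus W$. From the explicit form of $\rho_p({\bf 1}_X)$ recorded in the excerpt,
\begin{align}
\rho_p({\bf 1}_{S_W})^{p}=\frac{|S_W|(N-|S_W|)}{N^{p}}\Bigl(|S_W|^{\,p-1}+(N-|S_W|)^{\,p-1}\Bigr)\ \ge\ \frac{2\,|S_W|(N-|S_W|)}{N^{p}}, \notag
\end{align}
where the inequality holds because when $1\le|S_W|\le N-1$ both bases are $\ge1$ and $p-1\ge0$, so the parenthesis is $\ge2$, while for $|S_W|\in\{0,N\}$ both sides vanish; hence $\rho_p({\bf 1}_{S_W})\ge \tfrac1N\bigl(2|S_W|(N-|S_W|)\bigr)^{1/p}$. (Alternatively I could quote Lemma~\ref{lem:rho_p} together with Eq.~(\ref{eq:gpform}); that route is slightly lossier but still suffices, since $N\ge2$ whenever $k\le\floor{n/2}$.)

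Finally I would sum over $W$ and collapse, using that $t\mapsto t^{1/p}$ is subadditive on $[0,\infty)$ for $p\ge1$:
\begin{align}
|\partial\Omega|
&\ge\ \frac{C}{N}\sum_{W:\,|W|=k-1}\bigl(2|S_W|(N-|S_W|)\bigr)^{1/p} \notag\\
&\ge\ \frac{C}{N}\Bigl(\sum_{W:\,|W|=k-1}2|S_W|(N-|S_W|)\Bigr)^{1/p}=\frac{C}{N}\bigl(2|\partial_J\Omega|\bigr)^{1/p}, \notag
\end{align}
which is the claimed bound after substituting $N=n-k+1$. I expect the main obstacle to be conceptual rather than computational: recognizing the slice decomposition of $\partial\Omega$ and $\partial_J\Omega$ into boundaries of the sets $S_W$ inside the deleted subgraphs $G_W$. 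Once that is in hand the remainder is just the isoperimetric hypothesis together with the two elementary inequalities $|S_W|^{p-1}+(N-|S_W|)^{p-1}\ge2$ and the subadditivity of $t^{1/p}$.
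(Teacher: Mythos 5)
Your proposal is correct and follows essentially the same route as the paper's proof: decompose $\partial\Omega$ over the $(k-1)$-element cores $W$, apply the $(C,\rho_p)$-isoperimetric hypothesis to the slice indicator on each induced subgraph $G[V\setminus W]$, and recombine via subadditivity of $t\mapsto t^{1/p}$ into $(2|\partial_J\Omega|)^{1/p}$. The only difference is bookkeeping — you evaluate $\rho_p$ on the slice sets $S_W$ in closed form and use $|S_W|^{p-1}+(N-|S_W|)^{p-1}\ge 2$, where the paper reindexes the sums via Proposition \ref{prop:same-cardinality} and applies $\left(\sum_i x_i\right)^p\ge\sum_i x_i^p$ to indicator sums, which amounts to the same estimate.
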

Note that the inequality in Theorem \ref{thm:symprod-seminorm} is tight for $p=1$. 
To see this, let us consider a trivial scenario where $G$ is the complete graph on $n$ vertices, and $k=1$.
For the complete graph, we can compute the edge boundary of any vertex subset $X$ exactly. 
Denoting $x=|X|$ and $n=|V|$, we have $|\partial X| = \min\{x,n-x\} (n-1)$.
recall that from (\ref{eq:gpform}) that $g_1({\bf 1}_X) = \frac{2x (n-x)}{n}$. 
Then the edge-isoperimetric inequality for the complete graph with respect to the seminorm $g_1$ is equivalent to 
\begin{align}
(n-1 )  \min\{x,n-x\} \ge C 2x(1-x/n).
\end{align}
This inequality holds trivially when $x=0$, so let us consider $x \ge 1$.
Now focus on the scenario where $x \le n/2$. 
Then $(n-1 ) x \ge C 2x(1-x/n)$, which is equivalent to 
$(n-1 )  \ge C 2(1-x/n)$ and $C \le \frac{ n-1  }{2(1-x/n)}$.
To minimum upper bound for $C$ in this case is attained for $x=1$, and thus we have
$C \le \frac n 2$.
Now consider the scenario where $\frac n 2 <x \le n$. When $x=n$, the inequality again holds trivially. so we consider 
$\frac n 2 < x \le n-1$.
Then the inequality we are faced with is $(n-1 )  y \ge C 2y(1-y/n),$
where $y=n-x$. Since we just finished analyzing this scenario, we can conclude thatt the optimal isoperimetric constant is $C=\frac n 2$ for the complete graph.
Substituting this example into Theorem \ref{thm:symprod-seminorm}, since $G \sympow 1 = G$, we get for the complete graph
\begin{align}
|\partial X| \ge \frac{n}{2} \frac{1}{n} (2 |\partial X| )
\end{align}
which is equivalent to $1 \ge 1$ and hence the inequality in Theorem \ref{thm:symprod-seminorm} is tight for the complete graph.

\begin{proof}[Proof of Theorem \ref{thm:symprod-seminorm}]
For all $\Omega \subseteq V^{\{k\}}$,
note that $|\partial \Omega| = \|  {\bf 1}_\Omega \|_{E^{\{k\}}}$.
Two $k$-sets $X$ and $Y$ in $\Omega$ are adjacent in the graph $G^{\{k\}}$ if and only if the symmetric difference of $X$ and $Y$ is an edge in $E$.
Hence 
\begin{align}
|\partial \Omega|
&= 
\sum_{\substack{ W \subset V  \\ |W| = k-1} }
\sum_{ \{ u,v\} \in E[V \setminus W] }
\left| {\bf 1}_\Omega(W \cup \{ u \})  - { \bf 1 }_{\Omega}( W \cup \{v\} ) \right|.
\end{align}
Applying Theorem \ref{thm:tillich} with seminorm $\rho_p$ on each induced subgraph $G[V \setminus W]$ for every $(k-1)$-set $W$ with respect to the function ${\bf 1}_\Omega( W \cup \{\cdot \} )$, we get
\begin{align}
|\partial \Omega|
\ge 
\sum_{\substack{ W \subset V  \\ |W| = k-1} }
C \left( 
	\sum_{ x  \in V \setminus W }
\left| {\bf 1}_\Omega(W \cup \{ x \})  
- \sum_{y \in V \setminus W}
\frac{ { \bf 1 }_{\Omega}( W \cup \{y\}) }{ n- k +1 }  
\right|^p \right)^{1/p}. \label{eq:proof-sobolev1}
\end{align}
By subadditivity of the function $( \cdot)^{1/p}$ for all $p \ge 1$,
the inequality (\ref{eq:proof-sobolev1}) becomes
\begin{align}
|\partial \Omega|
\ge 
C \left( 
\sum_{\substack{ W \subset V  \\ |W| = k-1} }
	\sum_{ x \in V \setminus W }
\left| {\bf 1}_\Omega(W \cup \{ x \})  
- \sum_{y \in V \setminus W}
\frac{ { \bf 1 }_{\Omega}( W \cup \{y\}) }{ n- k +1 }  
\right|^p \right)^{1/p}. \label{eq:proof-sobolev1.1}
\end{align}
By Proposition \ref{prop:same-cardinality} we can reorder the summation in (\ref{eq:proof-sobolev1.1}) to get
\begin{align}
|\partial \Omega|
&\ge 
C \left( \sum_{X \in V^{\{k\}}} \sum_{x \in X} 
\left| 
	{\bf 1}_\Omega(X) - \sum_{ y \in V \setminus (X \setminus \{x\} )  } \frac{{\bf 1}_\Omega  (X \triangle \{x, y\} )  }{n-k+1}
\right|^p \right)^{1/p} .   \label{eq:proof-sobolev2}
\end{align}
Each $k$-set $X$ appearing in the inequality (\ref{eq:proof-sobolev2}) either belongs to $\Omega$ or not.
Applying simple arithmetic on the right hand side of (\ref{eq:proof-sobolev2}) above then yields
\begin{align}
&  
C \left( \sum_{X \in \Omega} \sum_{x \in X} 
\left(   \sum_{ y \in V \setminus (X \setminus \{x\} )  } \frac{1-{\bf 1}_\Omega  (X \triangle \{x, y\} )  }{n-k+1} \right)^p
+
 \sum_{X \notin \Omega} \sum_{x \in X}  
	\left( \sum_{ y \in V \setminus (X \setminus \{x\} )  } \frac{{\bf 1}_\Omega  (X \triangle \{x, y\} )  }{n-k+1} \right)^p
	\right)^{1/p} . \label{eq:proof-sobolev3}
\end{align}
Using the inequality $\left(\sum_{i} x_i \right)^p \ge \sum_{i} x_i^p$ for non-negative $x_i$, the expression (\ref{eq:proof-sobolev3}) becomes
\begin{align}
&
C \left( \sum_{X \in \Omega} \sum_{x \in X} 
   \sum_{ y \in V \setminus (X \setminus \{x\} )  } \frac{1-{\bf 1}_\Omega  (X \triangle \{x, y\} )  }{\left(n-k+1 \right)^p} 
+
  \sum_{X \notin \Omega} \sum_{x \in X}  
	\sum_{ y \in V \setminus (X \setminus \{x\} )  } \frac{{\bf 1}_\Omega  (X \triangle \{x, y\} )  }{\left( n-k+1\right)^p} 
	\right)^{1/p}  
	\notag\\
&=
 \frac{ C } { n-k+1 }  \left( 2 \sum_{X \notin \Omega} \sum_{x \in X}  
	 \sum_{ y \in V \setminus (X \setminus \{x\} )  } {\bf 1}_\Omega  (X \triangle \{x, y\} )  
\right)^{1/p}. \notag
\end{align}
To complete the proof, note that
\begin{align}
  \sum_{X \notin \Omega} \sum_{x \in X}  
	 \sum_{ y \in V \setminus (X \setminus \{x\} )  }  {\bf 1}_\Omega  (X \triangle \{x, y\} )   = |\partial_J \Omega| . \notag
\end{align}
\end{proof}
The eigenvalues of the combinatorial Laplacian of the Johnson graph $J(n,k)$ for $k=0,\dots , \lfloor n/2 \rfloor$ are 
$j(n+1-j)$ with multiplicities $\binom n j - \binom n {j-1}$, 
where $j = 0,\dots, k$ \cite[Section 12.3.2]{brouwer2011spectra}.
If $\lambda$ is the second smallest eigenvalue of the combinatorial Laplacian of a graph, then that graph is $(\frac{\lambda}{2}, g_1)$-isoperimetric \cite[Lemma 13.7.1]{godsil2001algebraic}.
Since the second smallest eigenvalue of the combinatorial Laplacian of the Johnson graph $J(n,k)$ is always $n$, $ |\partial_J \Omega | \ge 
  \frac{n}{2} g_1({\bf 1}_\Omega) $
for every $\Omega \subseteq V^{\{k\}}$.
Hence 
\begin{align}
(2 |\partial_J \Omega| )^{1/p} \ge 
(n g_1({\bf 1}_\Omega) )^{1/p} =n^{1/p} g_p({\bf 1}_\Omega)  . \label{eq:johnson-eip}
\end{align}
Using (\ref{eq:johnson-eip}) with Theorem \ref{thm:symprod-seminorm} together with Lemma \ref{lem:rho_p} yields the following corollary. 
\begin{corollary}
\label{coro:1}
Let $G=(V,E)$ be a graph with $n$ vertices, and let  $p \ge 1$ and $C >0$.  
Suppose that every vertex-induced subgraph of $G$ with $n-k+1$ vertices is $(C,\rho_p)$-isoperimetric.
Then $G^{\{k\}}$ is $(\frac{  C n^{1/p} }{n-k+1}, g_p)$-isoperimetric and $(\frac{  C n^{1/p} }{n-k+1}, \rho_p)$-isoperimetric.
\end{corollary}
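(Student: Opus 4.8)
The plan is to combine the three preceding results in a direct chain of inequalities, with no new combinatorics required. By Theorem \ref{thm:symprod-seminorm}, the hypothesis that every vertex-induced subgraph of $G$ on $n-k+1$ vertices is $(C,\rho_p)$-isoperimetric gives, for every $\Omega \subseteq V^{\{k\}}$,
\begin{align}
|\partial \Omega| \ge \frac{C}{n-k+1}(2|\partial_J \Omega|)^{1/p}. \notag
\end{align}
First I would invoke the spectral facts about the Johnson graph quoted just after the proof of Theorem \ref{thm:symprod-seminorm}: the second smallest eigenvalue of the combinatorial Laplacian of $J(n,k)$ equals $n$, so by \cite[Lemma 13.7.1]{godsil2001algebraic} the graph $J(n,k)$ is $(\tfrac n2, g_1)$-isoperimetric, i.e. $|\partial_J \Omega| \ge \tfrac n2 g_1({\bf 1}_\Omega)$. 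Substituting this into the bound above and using $(n g_1({\bf 1}_\Omega))^{1/p} = n^{1/p} g_1({\bf 1}_\Omega)^{1/p} = n^{1/p} g_p({\bf 1}_\Omega)$ — which is exactly Eq.~(\ref{eq:johnson-eip}), following from the closed form $g_p({\bf 1}_X) = (2|X||V\setminus X|/|V|)^{1/p}$ in (\ref{eq:gpform}) — yields
\begin{align}
|\partial \Omega| = \| {\bf 1}_\Omega \|_{E^{\{k\}}} \ge \frac{C n^{1/p}}{n-k+1} \, g_p({\bf 1}_\Omega). \notag
\end{align}
This is precisely the statement that $G^{\{k\}}$ is $(\tfrac{C n^{1/p}}{n-k+1}, g_p)$-isoperimetric in the sense of Definition \ref{def:1}, since that inequality must be verified for every vertex subset $\Omega$ and we have done so.

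For the second assertion, that $G^{\{k\}}$ is also $(\tfrac{C n^{1/p}}{n-k+1}, \rho_p)$-isoperimetric, I would simply apply the upper bound $\rho_p({\bf 1}_\Omega) \le g_p({\bf 1}_\Omega)$ from Lemma \ref{lem:rho_p} (applied to the graph $G^{\{k\}}$ with vertex set $V^{\{k\}}$): chaining it onto the displayed inequality gives $\| {\bf 1}_\Omega \|_{E^{\{k\}}} \ge \tfrac{C n^{1/p}}{n-k+1} \rho_p({\bf 1}_\Omega)$ for every $\Omega$, which is the claim. I do not need Theorem \ref{thm:tillich} here, because Definition \ref{def:1} only quantifies over indicator functions, and both $g_p$ and $\rho_p$ have been evaluated on indicators in closed form.

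There is essentially no obstacle: every step is a substitution or an application of a result already proved in the excerpt. The only point requiring a little care is bookkeeping the exponent $1/p$ — making sure that $(2|\partial_J\Omega|)^{1/p}$ with $|\partial_J\Omega| \ge \tfrac n2 g_1({\bf 1}_\Omega)$ correctly collapses to $n^{1/p} g_p({\bf 1}_\Omega)$ rather than, say, $n^{1/p} g_1({\bf 1}_\Omega)$; this is handled by the identity $g_1({\bf 1}_\Omega)^{1/p} = g_p({\bf 1}_\Omega)$, immediate from (\ref{eq:gpform}). If one wanted the sharper of the two conclusions to also be stated as a bound on the isoperimetric dimension, one could additionally feed the $g_p$-version through the Proposition relating $(C,g_{\delta/(\delta-1)})$-isoperimetry to dimension, but that is not needed for the corollary as stated.
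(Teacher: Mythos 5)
Your proposal is correct and follows exactly the paper's own route: Theorem \ref{thm:symprod-seminorm} combined with the Johnson-graph spectral bound (\ref{eq:johnson-eip}) gives the $g_p$ statement, and Lemma \ref{lem:rho_p} then converts it to the $\rho_p$ statement. The paper condenses this into a single sentence, and your careful handling of the exponent via $g_1({\bf 1}_\Omega)^{1/p}=g_p({\bf 1}_\Omega)$ is exactly the bookkeeping the paper leaves implicit.
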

This corollary plays a central role in the next subsection.

\subsection{Lower bounds from isoperimetric considerations}

If one were to compute the eigenvalues of $L_k$ directly, one may quickly run into computational difficulties. The reason is twofold. 
First, the size of the matrix $L_k$ is $\binom n k$, and in general scales exponentially with $n$. 
This leads to the difficulty in evaluating the eigenvalues of $L_k$ when one does not desire to utilize a computer with both exponential memory that runs in  exponential time.
In view of this problem, our methodology of obtain lower bounds on the eigenvalues of $L_k$ will be handy.
 The algorithms to compute lower bounds that we introduce from graph theory will considerably outperform algorithms that directly compute the eigenvalues of $L_k$.
 Instead of studying the symmetric products $G \sympow k$, we restrict our attention to the vertex-induced subgraphs of $G$.

 When one deletes vertices from a graph $G$ along with the corresponding edges, one obtains a vertex-induced subgraph of $G$.
 We denote the set of all graphs obtained by deleting exactly $k-1$ vertices from $G$ as $\mathcal V(G,k)$. 
 Clearly, there are $\binom n {k-1}$ graphs in the set $\mathcal V(G,k)$.
 From Corollary \ref{coro:1}, we know that if $C$ is less than the isoperimetric number of every graph in $\mathcal V(G,k)$ with corresponding dimension $\delta$, then the graph $G \sympow k$ has isoperimetric dimension $\delta$ with isoperimetric number at least 
 \begin{align}
 C \frac{n^{1-1/\delta}}{n-k+1}.
 \end{align}

We now proceed to outline how lower bounds on the eigenvalues of $L_k$ can be obtained from geometric considerations the graphs $G\sympow k$.
To achieve this, we will first illustrate how lower bounds on the spectrum of a graph Laplacians can depend only on the graph's geometry.
We begin by introducing some notation.
Let $D_G = \sum_{v \in V} d_v |v\>\<v|$ denote the degree matrix of a graph $G=(V,E)$. 
Let $A_G$ denote the adjacency matrix of a graph, which means that it is a matrix with matrix elements equal to either 0 or 1, and where  $\<u|A_G|v\>= 1$ iff the vertex $u$ is adjacent to $v$.
Let $L_G$ denote the Laplacian of a graph, which can be written as $D_G-A_G$.
In this subsection, we have the following theorem, which is essentially a Chung-Yau type bound \cite{ChY95} with Ostrovskii's correction \cite{Ost05} for unnormalized Laplacians.
\begin{theorem} \label{theorem:lowerbound}
Let a graph $G=(V,E)$ have dimension $\delta > 2$ with isoperimetric number $c$.
Let $b$ and $\beta$ be the minimum and maximum vertex degrees of $G$  respectively.
Then  
\begin{align}
\lambda_j(L_G) \ge 
\frac{bc^2}{16e \beta^{2}} 
 \left(
\frac{\delta-2}{\delta-1}
\right)^2
\left(
\frac { j \beta} {18|E|}
\right)^{2/\delta}.
\label{eq:lowerbound}
\end{align} 
\end{theorem}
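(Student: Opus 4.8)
The statement is a Chung--Yau type eigenvalue bound, so the plan is to obtain it from the Chung--Yau Sobolev inequality for the \emph{normalized} Laplacian $\mathcal L_G := D_G^{-1/2} L_G D_G^{-1/2}$ (in the form corrected by Ostrovskii), by first comparing the spectrum of $L_G$ with that of $\mathcal L_G$ via the degree bounds, and then converting the cardinality-normalized hypothesis ``$G$ has dimension $\delta$ with isoperimetric number $c$'' into the volume-normalized edge-isoperimetric inequality that the Chung--Yau estimate takes as input.

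\emph{Step 1: from $L_G$ to $\mathcal L_G$.} I would write $L_G = D_G^{1/2}\mathcal L_G D_G^{1/2}$ and, in the Courant--Fischer min--max formula for $\lambda_j(L_G)$, substitute $g = D_G^{1/2} f$ in the Rayleigh quotient. Since $D_G^{1/2}$ is invertible the substitution does not change the collection of admissible subspaces, and since $b I \le D_G \le \beta I$ one gets $f^\top f = g^\top D_G^{-1} g \le b^{-1} g^\top g$, together with $\mathcal L_G \succeq 0$, hence
\[
\lambda_j(L_G)\;\ge\; b\,\lambda_j(\mathcal L_G).
\]
This produces the leading factor $b$; the $\beta^{-2}$ in the target will come from rescaling the isoperimetric constant in Step 2.

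\emph{Step 2: rescaling the isoperimetric hypothesis.} Writing $\mathrm{vol}(X) = \sum_{v\in X} d_v$, we always have $\mathrm{vol}(X) \le \beta|X|$ and $\mathrm{vol}(V) = 2|E|$. I claim $G$ satisfies $|\partial X| \ge c'\,\mathrm{vol}(X)^{1-1/\delta}$ for every $X$ with $\mathrm{vol}(X) \le \mathrm{vol}(V)/2$, where $c' := c\,\beta^{-(1-1/\delta)}$. Indeed, for such an $X$: if $|X|\le n/2$ the hypothesis gives $|\partial X| \ge c|X|^{1-1/\delta} \ge c(\mathrm{vol}(X)/\beta)^{1-1/\delta}$; and if $|X|>n/2$, then $|V\setminus X|<n/2$, $|\partial X| = |\partial(V\setminus X)|$, and, because $\mathrm{vol}(V\setminus X) = \mathrm{vol}(V)-\mathrm{vol}(X)\ge\mathrm{vol}(X)$ in this volume range, $|V\setminus X| \ge \mathrm{vol}(V\setminus X)/\beta \ge \mathrm{vol}(X)/\beta$, so again $|\partial X| \ge c|V\setminus X|^{1-1/\delta} \ge c(\mathrm{vol}(X)/\beta)^{1-1/\delta}$. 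This is the point where one must check carefully that the range $|X|\le n/2$ on which the hypothesis is assumed actually covers the volume range $\mathrm{vol}(X)\le\mathrm{vol}(V)/2$ that the Chung--Yau bound requires.

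\emph{Step 3: invoking Chung--Yau/Ostrovskii and collecting constants.} With $\delta>2$, the Chung--Yau eigenvalue bound \cite{ChY95}, as corrected by Ostrovskii \cite{Ost05} for the degree-weighted setting, turns the inequality of Step 2 into a lower bound of the shape $\lambda_j(\mathcal L_G) \ge \kappa_0\,(c')^2\,\big(\tfrac{\delta-2}{\delta-1}\big)^2\,\big(j/\mathrm{vol}(V)\big)^{2/\delta}$ for an absolute constant $\kappa_0$ (the factor $\big(\tfrac{\delta-2}{\delta-1}\big)^2$ and the numerical constants come from the optimisation of a geometric series in their argument, the constant $e$ appearing through an estimate of type $(1+1/m)^m\le e$). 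Substituting $c' = c\,\beta^{-(1-1/\delta)}$ and $\mathrm{vol}(V)=2|E|$ gives $(c')^2\big(j/\mathrm{vol}(V)\big)^{2/\delta} = \beta^{-2}\big(\tfrac{\beta j}{2|E|}\big)^{2/\delta}$, so combining with Step 1 and weakening $2|E|$ to $18|E|$ and the absolute constant to $1/(16e)$ yields precisely
\[
\lambda_j(L_G)\;\ge\;\frac{bc^2}{16e\,\beta^2}\left(\frac{\delta-2}{\delta-1}\right)^2\left(\frac{j\beta}{18|E|}\right)^{2/\delta}.
\]
The deep content is imported wholesale from \cite{ChY95,Ost05}, so the main obstacle I anticipate is not conceptual but one of bookkeeping: pinning down Ostrovskii's corrected constant in the degree-weighted (unnormalized) setting, checking that the cardinality-to-volume rescaling in Step 2 loses no more than the slack already absorbed into $16e$ and $18$, and confirming that $\delta>2$ is exactly what is needed both for the exponent $\tfrac{\delta-2}{\delta-1}$ to be positive and for the series underlying \cite{ChY95} to converge.
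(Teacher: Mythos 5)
Your proposal follows essentially the same route as the paper's proof: comparing $\lambda_j(L_G)$ with $\lambda_j(\widetilde L_G)$ via the degree bounds (the paper proves this comparison with singular-value inequalities rather than Courant--Fischer, but the lemma is the same), converting the cardinality-normalized isoperimetric hypothesis to a volume-normalized one with constant $c_\delta = c\beta^{-(1-1/\delta)}$, and then invoking the Chung--Yau eigenvalue bound with Ostrovskii's corrected Sobolev constant. The only cosmetic difference is that you treat the factors $16e$ and $18$ as absorbable slack, whereas in the paper they arise exactly as Ostrovskii's $\tfrac{1}{16}$, Chung--Yau's $\tfrac{1}{e\,3^{4/\delta}} = \tfrac{1}{e}\cdot 9^{-2/\delta}$, and $\mathrm{vol}(G)=2|E|$ giving $9\cdot 2|E|=18|E|$.
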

When a graph is connected, its degree matrix is non-singular, 
and we can write its normalized Laplacian of $G$ as 
\begin{align}
\widetilde {L}_G= D_G^{-1/2} {L}_G D_G^{-1/2}
\end{align}
The proof of Theorem \ref{theorem:lowerbound} relies trivially on the result on the corresponding result for lower bounds on the spectrum of normalized Laplacians. 
The connection is given by the following lemma.
\begin{lemma} \label{lem:Lbounds}
If the graph has minimum and maximum vertex degrees given by $b$ and $\beta$ respectively,   
 \begin{align}
b \lambda_j(\widetilde L_G)   \le \lambda_j(L_G) \le \beta \lambda_j(\widetilde L_G). \label{eq:Laplacians}
\end{align} 
\end{lemma}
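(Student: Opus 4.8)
The plan is to reduce the claim to the Courant--Fischer min--max characterization of eigenvalues after a change of variables relating $L_G$ and $\widetilde L_G$. The starting point is the congruence identity
\[
L_G = D_G^{1/2}\,\widetilde L_G\, D_G^{1/2},
\]
which is immediate from $L_G = D_G - A_G$ and $\widetilde L_G = I - D_G^{-1/2} A_G D_G^{-1/2}$, and which makes sense because we are under the standing assumption (stated just before the lemma) that $G$ is connected, so $D_G$ is nonsingular and $D_G^{\pm 1/2}$ is well defined.

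First I would write, with the eigenvalues ordered increasingly and $0$-indexed,
\[
\lambda_j(L_G) = \min_{\dim S = j+1}\ \max_{0 \neq x \in S} \frac{\langle x, L_G x\rangle}{\langle x, x\rangle},
\]
and substitute $x = D_G^{-1/2} z$. As $x$ ranges over a subspace $S$ of dimension $j+1$, the vector $z = D_G^{1/2} x$ ranges over $D_G^{1/2} S$, which still has dimension $j+1$ since $D_G^{1/2}$ is invertible; hence the min over $(j+1)$-dimensional subspaces is unchanged. Using $\langle D_G^{-1/2} z, L_G D_G^{-1/2} z\rangle = \langle z, \widetilde L_G z\rangle$ and $\langle D_G^{-1/2} z, D_G^{-1/2} z\rangle = \langle z, D_G^{-1} z\rangle$, this gives
\[
\lambda_j(L_G) = \min_{\dim T = j+1}\ \max_{0 \neq z \in T} \frac{\langle z, \widetilde L_G z\rangle}{\langle z, D_G^{-1} z\rangle}.
\]

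The second step is to sandwich the denominator. Every degree obeys $b \le d_v \le \beta$, so $\beta^{-1}\langle z, z\rangle \le \langle z, D_G^{-1} z\rangle \le b^{-1}\langle z, z\rangle$ for all $z$; since $\widetilde L_G$ is positive semidefinite the numerator $\langle z, \widetilde L_G z\rangle$ is nonnegative, hence for every nonzero $z$
\[
b\,\frac{\langle z, \widetilde L_G z\rangle}{\langle z, z\rangle} \ \le\ \frac{\langle z, \widetilde L_G z\rangle}{\langle z, D_G^{-1} z\rangle} \ \le\ \beta\,\frac{\langle z, \widetilde L_G z\rangle}{\langle z, z\rangle}.
\]
Taking the max over $z$ in a fixed subspace $T$ and then the min over all $(j+1)$-dimensional $T$ preserves these pointwise inequalities, and by Courant--Fischer the outer expressions equal $b\,\lambda_j(\widetilde L_G)$ and $\beta\,\lambda_j(\widetilde L_G)$; this yields $b\,\lambda_j(\widetilde L_G) \le \lambda_j(L_G) \le \beta\,\lambda_j(\widetilde L_G)$, as claimed.

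I do not expect a genuine obstacle. The one point requiring care is that $L_G$ and $D_G$ do not commute, so one cannot directly apply Weyl monotonicity to $D_G^{1/2}\widetilde L_G D_G^{1/2}$ against $b\,\widetilde L_G$ in the positive-semidefinite order; the $D_G^{-1/2}$ substitution in the Rayleigh quotient is precisely what sidesteps this. It is also worth recording explicitly that connectedness of $G$ is used so that $\widetilde L_G$ and $D_G^{-1}$ exist.
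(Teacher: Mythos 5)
Your argument is correct, but it takes a different route from the paper's. The paper proves the lemma by invoking the singular-value product inequalities $s_i(AB)\le s_i(A)\,s_1(B)$ and $s_i(AB)\le s_1(A)\,s_i(B)$ (citing Bhatia), applying them to the congruences $\widetilde L_G = D_G^{-1/2} L_G D_G^{-1/2}$ and $L_G = D_G^{1/2}\widetilde L_G D_G^{1/2}$, and then identifying singular values with eigenvalues because all the matrices involved are positive semidefinite; the constants $b^{-1}$ and $\beta$ enter as $s_1(D_G^{-1})$ and $s_1(D_G)$. You instead run the same congruence through the Courant--Fischer min--max characterization, converting $\lambda_j(L_G)$ into a generalized Rayleigh quotient with denominator $\langle z, D_G^{-1}z\rangle$ and sandwiching that denominator by $\beta^{-1}\|z\|^2$ and $b^{-1}\|z\|^2$. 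Both proofs are sound and both hinge on the same two facts (the congruence relation and the degree bounds); yours is self-contained and makes the mechanism by which $b$ and $\beta$ enter more transparent, while the paper's is shorter modulo the cited singular-value inequality. Your closing remarks are also well taken: the non-commutativity caveat is real, and the invertibility of $D_G$ (guaranteed in the paper by the connectedness assumption stated just before the lemma, though absence of isolated vertices would suffice) is needed in both arguments.
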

\begin{proof}
Denoting the $i$-th largest singular value of a matrix $A$ of size $d_a$ as $s_i(A)$ with $s_1(A) \ge \dots \ge s_{d_a}(A)$, 
we have from Ref \cite[Problem III.6.5]{Bhatia} the inequalities 
\begin{align}
 s_i(AB) \le s_i(A) s_1(B), \quad s_i(AB) \le s_1(A) s_i(B) .
\end{align}
Applying the above inequalities iteratively,  
it follows that
\begin{align}
s_i(\widetilde {L}_G) 
&= s_i( D_G^{-1/2} {L}_G D_G^{-1/2} ) \le s_i(L_G) s_1( D_G^{-1 } ), \notag\\
s_i( {L}_G)  
&= s_i( D_G^{ 1/2} {\widetilde L}_G D_G^{ 1/2} ) \le s_i(\widetilde L_G) s_1( D_G ). \label{eq:Lbound} 
\end{align}
Since the matrices $D_G, D_G^{-1}, L_G$ and $\widetilde L_G$ are positive semidefinite,
their singular values are equivalent to their eigenvalues. 
The largest eigenvalue of $D_G$ and $D_G^{-1}$ are $\beta$ and $b^{-1}$ respectively. Hence the inequalities (\ref{eq:Lbound}) then give the result.
\end{proof}
Lower bounds on the eigenvalues of the normalized Laplacian can be obtained from the graph's Sobolev inequalities,
as shown in the seminal work of Chung and Yau \cite{ChY95}. 
Because of a gap in the proof in \cite{ChY95} as shown by Ostrovskii \cite[after Equation 8]{Ost05}, we have to take Ostrovskii's correction into account when we prove the corresponding lower bounds on the graph's Laplacian which we state explicitly in Theorem \ref{theorem:lowerbound}.
\begin{proof}[Proof of Theorem \ref{theorem:lowerbound}]
For a graph $G=(V,E)$, denote the volume of a subset of vertices $X$ as
 ${\rm vol}(X) = \sum_{v \in X}d_v$. 
Also let ${\rm vol}(G) = \sum_{v \in V}d_v$ denote the sum of all vertex degrees in the graph $G$. 
The isoperimetric inequality we focus on is
\begin{align}
|\partial X| \ge c_\delta ({\rm vol}(X))^{1-1/\delta}
\end{align}
where ${\rm vol}(X) \le {\rm vol} (V \setminus X)$.
Note here that ${\rm vol}(X)$ is in general different from the number of vertices in $X$. 
While $|X|$ counts the number of vertices in $X$, the volume ${\rm vol}(X)$ counts the sum of all vertex degrees of vertices in $X$.
We may also interpret ${\rm vol}(X)$ as the number of vertices in $X$ multiplied by the average degree of the vertices in $X$.
The Sobolev inequality on graphs has the form 
\begin{align}
\sum_{\{u, v\} \in E} |f(u) - f(v) |^2
\ge
A \min_{\mu \in \mathbb R} 
\left(
\sum_{v \in V} |f(v)-\mu|^{\alpha} d_v 
\right)^{2/\alpha}, \label{eq:mysob1}
\end{align}
where $\alpha = \frac{2 \delta}{\delta-2}$.
Typically $A$ depends on $c_\delta$ and $\delta$.
Chung and Yau proved when the above Sobolev inequality holds for a graph, the eigenvalues of the graph's normalized Laplacians satisfy the lower bound
\begin{align}
\lambda_j(\widetilde L_G) \ge \frac{A}{e 3^{4/\delta} } \left(
j / {\rm vol}(G)
\right)^{2/\delta}.
\end{align}
When $\delta >2$, the inequality (\ref{eq:mysob1}) holds with $A = \frac{c_\delta^2 }{16} \left(
\frac{\delta-2}{\delta-1}
\right)^2$
using Ostrovskii's Sobolev inequality \cite[(8)]{Ost05}.
Using this fact with Lemma \ref{lem:Lbounds}, we get
\begin{align}
\lambda_j(L_G)
\ge
 b  
 \frac{c_\delta^2 }{16e} \left(
\frac{\delta-2}{\delta-1}
\right)^2  \left(
{j \over 9{\rm vol}(G)}
\right)^{2/\delta}. 
\end{align}
It remains to relate $c_\delta$ to $c$.
Let $\beta$ be the maximum vertex degree of $G$.
Since $G$ has isoperimetric dimension $\delta$ and isoperimetric number $c$,
its vertex subsets $X$ satisfy the bound
\begin{align}
|\partial X| 
&\ge c \min\{|X|, |V| - |X| \}^{1-1/\delta}  \ge \frac{c}{\beta ^{1-1/\delta}} \min \left\{ { \rm vol}({X}) , {\rm vol}({V \setminus X}) \right\} ^{1-1/\delta} .
\end{align}
Hence we can take $c_\delta=c/\beta^{1-1/\delta}$.
The hand-shaking Lemma also implies that ${\rm vol}(G) =2|E|$, and we get the result.
\end{proof}

Using Theorem~\ref{theorem:lowerbound}, we can easily obtain lower bounds on the eigenvalues of $L_k$ using $b_k$ and $\beta_k$, which are the minimum and maximum vertex degrees of $G \sympow k$ respectively. 
Note that $\beta_1$ denotes the maximum number of interacting neighbors each spin experiences in the Heisenberg ferromanget.
To bound $b_k$ and $\beta_k$, note that every vertex in $G \sympow k$ is a set of vertices in $G$ with $k$ elements. 
Therefore the vertex degree of $\{x_1,\dots, x_k\}$ in $G \sympow k$ is just the edge-boundary of $\{x_1,\dots, x_k\}$ in $G$.
Thus $b_{k} \ge c k^{1-1/\delta}$ 
whenever $G$ has dimension $\delta$ with isoperimetric number $c$.
Also, when $\beta$ is the maximum vertex degree of $G$, we trivially have 
and $\beta_{k} \le k \beta_1 $.
Hence Corollary \ref{coro:1} implies that $c_k\ge a_k \frac{n^{1-1/\delta_k}}{n-k+1}$, 
where every vertex-induced subgraph of $G$ with $k-1$ deleted vertices 
has dimension $\delta_k$ with isoperimetric number $a_k$. 
The number of edges in $G \sympow k$ is at most $\beta_k \binom n k / 2$ where $n$ is the number of spins.
Then if $\delta_k >2$ for $k=1,\dots, n/2$, Theorem~\ref{theorem:lowerbound} implies that
\begin{align}
\lambda_j(L_k)
&\ge
\frac{c k^{-1/\delta} a_k^2}{16e k \beta_1^2} 
\left(
\frac{n^{1-1/\delta_k}}{n-k+1}
\right)^2
 \left(
\frac{\delta-2}{\delta-1}
\right)^2
\left(
\frac j {9   \binom n k }
\right)^{2/\delta_k}.
\label{eq:lowerbound}
\end{align}
To numerically estimate $a_k$, it suffices to numerically compute the isoperimetric numbers of graphs $K \in \mathcal V(G,k)$ with vertex set $V(K)$ and edge set $E(K)$.
To find the isoperimetric number of $K$, 
we need to solve its corresponding edge-isoperimetric problem (EIP) on $K$, 
which involves finding 
\begin{align}
\min \{ |\partial X | : X \subseteq V(K), |X| = j \}
\end{align}
for every $1 \le  j \le |V(K)|/2$.
While solving the EIP exactly is NP-hard \cite{garey1976some,brandes2009vertex},
we conjecture that there can be approximation algorithms to approximately solve the EIP in polynomial time. 
\begin{conjecture}\label{conjecture-EIP}
Let $G=(V,E)$ be a graph. 
For every $k=1,\dots, |v|/2$, let $e_k = \min\{|\partial X|: X \subseteq V, |X|=k \} $. Then for every $\epsilon >0$ and and for every $k=1,\dots, |V|/2$, there exists a polynomial time approximation algorithm that computes $e'_k$ such that $(1-\epsilon) e_k \le e'_k \le e_k$.
\end{conjecture}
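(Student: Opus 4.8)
Since the edge-isoperimetric problem is NP-hard in general, any proof of Conjecture \ref{conjecture-EIP} must go through a relaxation whose optimum is both efficiently computable and provably close to $e_k$. The natural starting point is to cast the problem as a cardinality-constrained quadratic integer program: writing $L$ for the combinatorial Laplacian of $G$ and using ${\bf 1}_X^\top L\,{\bf 1}_X = |\partial X|$, one has
\begin{align}
e_k = \min\left\{ {\bf 1}_X^\top L\,{\bf 1}_X \ :\ X \subseteq V,\ |X| = k \right\}.
\end{align}
Relaxing the Boolean variables to unit vectors and adjoining the usual Goemans--Williamson triangle inequalities together with the cardinality constraint (a linear equation in the Gram entries $\langle y_u, y_v\rangle$) yields a semidefinite program $\mathrm{SDP}_k$ solvable to accuracy $\epsilon$ in time polynomial in $n$ and $\log(1/\epsilon)$. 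Its optimum $e'_k$ is a \emph{lower} bound on $e_k$, so $e'_k \le e_k$ holds for free; the entire content of the conjecture is the reverse inequality $e'_k \ge (1-\epsilon)e_k$, i.e.\ an integrality-gap bound of $1/(1-\epsilon)$.

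To push the gap down to $1+O(\epsilon)$ I would climb the Lasserre / Sum-of-Squares hierarchy: take $r = r(\epsilon)$ rounds to obtain a relaxation computable in $n^{O(r)}$ time, and then design a rounding converting the resulting pseudo-distribution over $k$-sets into an honest $k$-set $X'$ with $|\partial X'| \le (1+\epsilon)\cdot(\text{hierarchy value})$. The rounding I would attempt is conditioning/propagation in the style of Barak--Raghavendra--Steurer and Guruswami--Sinop: condition the degree-$r$ pseudo-distribution on the values of a few carefully chosen vertices, which drives it close to a product distribution whenever $L$ has only few small eigenvalues, and then sample. The hypothesis of bounded isoperimetric dimension is exactly what makes this work, and here Theorem \ref{theorem:lowerbound} is directly useful: a graph of dimension $\delta > 2$ already has $\lambda_j(L_G)$ growing like $j^{2/\delta}$, so only $O(\epsilon^{-\delta/2})$-many Laplacian eigenvalues lie below the threshold at which $O(\epsilon^{-\delta/2})$ rounds of the hierarchy render propagation rounding near-lossless.

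For the restricted families of interaction graphs most relevant to the Heisenberg setting --- planar and bounded-genus lattices, and more generally minor-closed or bounded-treewidth graphs --- the conjecture should be provable outright and unconditionally: Baker-style layering and bidimensionality give a genuine PTAS for the fixed-cardinality minimum cut, and for bounded treewidth one solves it exactly by dynamic programming over a tree decomposition, tracking the pair $(|\partial X|, |X|)$ along each bag. I would prove Conjecture \ref{conjecture-EIP} for those classes first, since that already suffices to make the lower bounds of Theorem \ref{theorem:lowerbound} effective for the physically interesting models, and since it isolates the purely combinatorial difficulty from the rest of the paper.

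The main obstacle --- and the reason I would not expect the conjecture to hold verbatim for \emph{all} graphs --- is its closeness to Densest-$k$-Subgraph: when $G$ is $d$-regular, $|\partial X| = dk - 2\,e(X)$ with $e(X)$ the number of internal edges, so approximating $e_k$ from below within $1-\epsilon$ is the same as estimating $\max_{|X|=k} e(X)$ within an additive $\tfrac{\epsilon}{2}(dk - 2e_{\max})$; in the dense regime this is a stronger guarantee than any known polynomial-time algorithm for Densest-$k$-Subgraph achieves, and such a guarantee is believed impossible under standard complexity assumptions. Hence the real work is not algorithmic but structural: one must identify and exploit features --- low isoperimetric dimension, bounded degree, excluded minors --- of the \emph{specific} graphs $K \in \mathcal V(G,k)$ that arise, to certify that a constant number of Sum-of-Squares rounds already pins down $e_k$ up to $(1-\epsilon)$. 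Showing that constant-level Sum-of-Squares certifies near-optimal edge-isoperimetric profiles on graphs of bounded isoperimetric dimension is the crux, and it sits squarely in the orbit of the Small-Set Expansion Hypothesis.
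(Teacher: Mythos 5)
The statement you are addressing is posed in the paper as a \emph{conjecture}: the author offers no proof, only the one-sentence heuristic that many NP-hard problems admit efficient approximation algorithms. So there is no proof in the paper to compare yours against, and your submission is likewise not a proof --- it is a research programme. Every step that would carry the actual content of the conjecture is deferred: the integrality gap of your cardinality-constrained SDP is not bounded, the number of Sum-of-Squares rounds $r(\epsilon)$ needed for propagation rounding to become ``near-lossless'' is asserted rather than derived, and the claim that bounded isoperimetric dimension forces the Laplacian to have only $O(\epsilon^{-\delta/2})$ small eigenvalues (so that conditioning works) is exactly the kind of statement that would need a theorem, not an appeal to the style of Barak--Raghavendra--Steurer. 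As written, nothing in the proposal establishes $(1-\epsilon)e_k \le e'_k$ for any nontrivial class of graphs except the bounded-treewidth case, where the dynamic program over $(|\partial X|,|X|)$ is indeed an exact polynomial-time algorithm --- that is the one genuinely complete fragment, and it falls far short of the conjecture's ``for every graph'' quantifier.

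That said, your most concrete observation is also the most valuable one, and it cuts \emph{against} the statement rather than toward it: for $d$-regular $G$ the identity $|\partial X| = dk - 2e(X)$ makes a $(1-\epsilon)$-multiplicative underestimate of $e_k$ equivalent to an additive $\tfrac{\epsilon}{2}e_k$ overestimate of the Densest-$k$-Subgraph value, which in the regime where $e_k$ is small is stronger than anything known or believed achievable in polynomial time. If that reduction withstands scrutiny, the honest conclusion is that Conjecture \ref{conjecture-EIP} is likely false as stated and should be weakened --- either to the restricted graph classes relevant to physical lattices (planar, bounded genus, bounded treewidth, excluded minor), or to an approximation guarantee of the form $e'_k \ge e_k/\alpha(n)$ for some slowly growing $\alpha$. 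You should present that as the finding, rather than as an obstacle to a proof you do not have.
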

A reason why Conjecture \ref{conjecture-EIP} might be true is because for a multitude of different NP-hard problems, there do exist approximation algorithms that have efficient runtimes \cite{hochbaum1996approximation}.
If our Conjecture \ref{conjecture-EIP} holds, 
then lower bounds on the eigenvalues can be evaluated in 
$O({\rm poly}(n) n^{k-1})$ time with $O(n)$ memory.
In contrast, computing the eigenvalues of $L_k$ directly in practice requires a computer in $O(n^{3k})$ time and $O(n^{2k})$ memory.
Even using the best asymptotic algorithm for matrix multiplication would require at least $O(n^{2k})$ time \cite{demmel2007fast} and $O(n^{2k})$ memory.

\section{Discussions}
\label{sec:conclusion} 
In this paper, we obtain many bounds on the spectrum of the ferromagnetic HHs. 
For this, we rely on tools from graph theory and matrix analysis. 
Obviously, with these bounds on the eigenvalues of the Heisenberg ferromagnet, 
one can easily compute bounds on thermodynamic quantities of the corresponding Heisenberg models such as free energy. 

With regards to upper bounds based on graph distances, there remains a potential to further tighten our bounds by optimizing over the partitions used in Eq.~(4.22) of Corollary 4.4 in Ref.~\cite{CGY96}. 
This is however beyond the scope of the current paper and we leave this for future investigation.
With regards to the lower bounds based on isoperimetric inequalities, 
we wish to point out that the edge-isoperimetric problem for the Johnson graph, also known as the problem of Kleitman and West \cite{Har91}, remains unsolved. 
Given this fact, better edge-isoperimetric inequalities for the Johnson graph will improve the edge-isoperimetric inequalities of the symmetric product of finite graphs given in Corollary \ref{coro:1}.
Also advances in the theory of the graph expansion properties of vertex induced subgraphs will certainly also improve the bounds given in this corollary.
Directly deriving lower bounds on the combinatorial Laplacian of a graph from discrete Sobolev inequalities can also help to improve the constants involved in the bound.
Moreover, a polynomial-time approximation algorithm for solving the edge-isoperimetric problem for graphs (Conjecture \ref{conjecture-EIP}) would together with the methods already in this paper, yield a polynomial-time algorithm for computing lower bounds for the eigenvalues of the ferromagnetic HH.
 
 To recap, in the spin half case, the computational basis of the ferromagnetic HM can be represented by a binary string. 
Each binary string is represented as a vertex, and interactions represented as edges between the vertices.
In the spin-half case, each exchange interaction is equivalent to a swap operator, and acts as a transposition on the binary strings. The relationship between different binary strings under transpositions that correspond to the interaction are represented as a graph. 
Because transpositions leave the Hamming weight of these binary strings invariant, the HH naturally decomposes into a direct sum of graphs labeled by all the possible Hamming weights from 0 to $n$.

One might wonder how the results here could generalize to the spin $S$ case.
We briefly sketch how one might proceed to achieve this.
We can observe that the computational basis of the ferromagnetic HM can be represented by a $(2S+1)$-nary string.
We can represent these $(2S+1)$-nary strings as vertices on a graph, and interactions as relationships between the vertices.
In this representation, the spin-S exchange operator maps a $(2S+1)$-nary string to a linear combination of $(2S+1)$-nary strings. Since one can show that the coefficients of this linear combination are non-negative, if all non-zero exchange constants are the same, the coefficients can rescale to allow us to interpret them as probabilities of transitions from one vertex to another vertex.
Since the spin-$S$ exchange operator conserves total spin, the $(2S+1)$-nary strings naturally partition into disjoint subsets, where only strings in different partitions do not interact, and strings in the same partition can have their interactions represented as a Markov model. 
We expect the spectrum HH to thereby be related to the spectrum of the associated $(2S+1)$ Markov models. 
Markov models describe stochastic transitions between a set of discrete states and are well-studied.
We therefore expect that connections between the theory of Markov models and spin-$S$ HMs can bring similar insights into bound the spectrum of spin-$S$ HMs.

\section{Acknowledgements}
YO likes to thank Robert Seiringer and anonymous referees for their comments and recommendations that have helped to improve this manuscript.
YO acknowledges support from the Singapore National Research Foundation under NRF Award NRF-NRFF2013-01, the U.S. Air Force Office of Scientific Research under AOARD grant FA2386-18-1-4003,
and the Singapore Ministry of Education. 
This work was supported by the EPSRC (grant no. EP/M024261/1)

\bibliographystyle{alpha} 

\begin{thebibliography}{FMFPH{\etalchar{+}}12}

\bibitem[AGRR07]{AGRR07}
Koenraad Audenaert, Chris Godsil, Gordon Royle, and Terry Rudolph.
\newblock Symmetric squares of graphs.
\newblock {\em Journal of Combinatorial Theory, Series B}, 97(1):74 -- 90,
  2007.

\bibitem[AIP10]{alzaga2010spectra}
Afredo Alzaga, Rodrigo Iglesias, and Ricardo Pignol.
\newblock Spectra of symmetric powers of graphs and the weisfeiler--lehman
  refinements.
\newblock {\em Journal of Combinatorial Theory, Series B}, 100(6):671--682,
  2010.

\bibitem[Alo86]{Alon1986}
Noga Alon.
\newblock Eigenvalues and expanders.
\newblock {\em Combinatorica}, 6(2):83--96, 1986.

\bibitem[Bet31]{Bet31}
H.~Bethe.
\newblock Zur {T}heorie der {M}etalle.
\newblock {\em Zeitschrift f{\"u}r {P}hysik}, 71(3-4):205--226, 1931.

\bibitem[BF09]{brandes2009vertex}
Ulrik Brandes and Daniel Fleischer.
\newblock Vertex bisection is hard, too.
\newblock {\em Journal of Graph Algorithms and Applications}, 13(2):119--131,
  2009.

\bibitem[BH11]{brouwer2011spectra}
Andries~E Brouwer and Willem~H Haemers.
\newblock {\em Spectra of graphs}.
\newblock Springer Science \& Business Media, 2011.

\bibitem[Bha97]{Bhatia}
Rajendra Bhatia.
\newblock {\em Matrix Analysis}.
\newblock Springer-Verlag, 1997.

\bibitem[BI84]{bannai1984algebraic}
Eiichi Bannai and Tatsuro Ito.
\newblock {\em Algebraic combinatorics}.
\newblock Benjamin/Cummings Menlo Park, 1984.

\bibitem[BJGER67]{baker1967two}
GA~Baker~Jr, HE~Gilbert, J~Eve, and GS~Rushbrooke.
\newblock On the two-dimensional, spin-12 heisenberg ferromagnetic models.
\newblock {\em Physics Letters A}, 25(3):207--209, 1967.

\bibitem[Blu03]{Blundell}
Stephen Blundell.
\newblock {\em {Magnetism in Condensed Matter}}.
\newblock Oxford master series in condensed matter physics, Great Clarendon
  Street, Oxford OX2 6DP, first edition, 2003.

\bibitem[CG11]{cullity2011introduction}
Bernard~Dennis Cullity and Chad~D Graham.
\newblock {\em Introduction to magnetic materials}.
\newblock John Wiley \& Sons, 2011.

\bibitem[CGS14]{correggi2014validity}
Michele Correggi, Alessandro Giuliani, and Robert Seiringer.
\newblock Validity of spin-wave theory for the quantum heisenberg model.
\newblock {\em EPL (Europhysics Letters)}, 108(2):20003, 2014.

\bibitem[CGS15]{correggi2015validity}
Michele Correggi, Alessandro Giuliani, and Robert Seiringer.
\newblock Validity of the spin-wave approximation for the free energy of the
  {H}eisenberg ferromagnet.
\newblock {\em Communications in Mathematical Physics}, 339(1):279--307, 2015.

\bibitem[CGY96]{chung1996upper}
Fan~RK Chung, A~Grigor'Yan, and S-T Yau.
\newblock Upper bounds for eigenvalues of the discrete and continuous laplace
  operators.
\newblock {\em advances in mathematics}, 117(2):165--178, 1996.

\bibitem[CGY97]{CGY96}
FRK Chung, A~Grigor’yan, and ST~Yau.
\newblock Eigenvalues and diameters for manifolds and graphs.
\newblock {\em Tsing Hua lectures on geometry \& analysis (Hsinchu,
  1990--1991)}, pages 79--105, 1997.

\bibitem[Chu97]{chung1997spectral}
Fan~RK Chung.
\newblock {\em Spectral graph theory}, volume~92.
\newblock American Mathematical Soc., 1997.

\bibitem[CLR10]{caputo2010proof}
Pietro Caputo, Thomas Liggett, and Thomas Richthammer.
\newblock Proof of aldous’ spectral gap conjecture.
\newblock {\em Journal of the American Mathematical Society}, 23(3):831--851,
  2010.

\bibitem[CMM01]{chung2001large}
CH~Chung, JB~Marston, and Ross~H McKenzie.
\newblock Large-{N} solutions of the {H}eisenberg and {H}ubbard-{H}eisenberg
  models on the anisotropic triangular lattice: application to {C}s2{C}u{C}l4
  and to the layered organic superconductors $\kappa$-({BEDT-TTF}) 2{X}
  ({BEDT-TTF}$\equiv$ bis (ethylene-dithio) tetrathiafulvalene); {X}$\equiv$
  anion.
\newblock {\em Journal of Physics: Condensed Matter}, 13(22):5159, 2001.

\bibitem[CY95]{ChY95}
F.~R.~K. Chung and S.-T. Yau.
\newblock Eigenvalues of graphs and {S}obolev inequalities.
\newblock {\em Combinatorics, Probability and Computing}, 4:11--25, 3 1995.

\bibitem[DBK{\etalchar{+}}00]{divincenzo2000universal}
David~P DiVincenzo, Dave Bacon, Julia Kempe, Guido Burkard, and K~Birgitta
  Whaley.
\newblock Universal quantum computation with the exchange interaction.
\newblock {\em Nature}, 408(6810):339, 2000.

\bibitem[DDH07]{demmel2007fast}
James Demmel, Ioana Dumitriu, and Olga Holtz.
\newblock Fast linear algebra is stable.
\newblock {\em Numerische Mathematik}, 108(1):59--91, 2007.

\bibitem[DDL03]{duan2003controlling}
L-M Duan, E~Demler, and Mikhail~D Lukin.
\newblock Controlling spin exchange interactions of ultracold atoms in optical
  lattices.
\newblock {\em Physical review letters}, 91(9):090402, 2003.

\bibitem[Del73]{delsarte1973algebraic}
Philippe Delsarte.
\newblock {\em An algebraic approach to the association schemes of coding
  theory}.
\newblock PhD thesis, Philips Research Laboratories, 1973.

\bibitem[DL98]{delsarte1998association}
Philippe Delsarte and Vladimir~I. Levenshtein.
\newblock Association schemes and coding theory.
\newblock {\em IEEE Transactions on Information Theory}, 44(6):2477--2504,
  1998.

\bibitem[EAT94]{EAT94}
Sebastian Eggert, Ian Affleck, and Minoru Takahashi.
\newblock Susceptibility of the spin 1/2 {H}eisenberg antiferromagnetic chain.
\newblock {\em Phys. Rev. Lett.}, 73:332--335, Jul 1994.

\bibitem[FMFPH{\etalchar{+}}12]{fabila2012token}
Ruy Fabila-Monroy, David Flores-Pe{\~n}aloza, Clemens Huemer, Ferran Hurtado,
  Jorge Urrutia, and David~R Wood.
\newblock Token graphs.
\newblock {\em Graphs and Combinatorics}, 28(3):365--380, 2012.

\bibitem[FT84]{Fad84}
L.~D. Faddeev and L.~A. Takhtadzhyan.
\newblock Spectrum and scattering of excitations in the one-dimensional
  isotropic {H}eisenberg model.
\newblock {\em Journal of Soviet Mathematics}, 24(2):241--267, 1984.

\bibitem[{}Ger31]{Ger31}
S.\ {}Ger\v{s}gorin.
\newblock {\"{U}ber die Abgrenzung der Eigenwerte einer Matrix}.
\newblock {\em Bulletin de l'Acad\'{e}mie des Sciences de l'URSS. Classe des
  sciences math\'{e}matiques et na}, (6):749--754, 1931.

\bibitem[GJS76]{garey1976some}
Michael~R Garey, David~S. Johnson, and Larry Stockmeyer.
\newblock Some simplified np-complete graph problems.
\newblock {\em Theoretical computer science}, 1(3):237--267, 1976.

\bibitem[GR01]{godsil2001algebraic}
Chris Godsil and Gordon Royle.
\newblock {\em Algebraic graph theory, volume 207 of Graduate Texts in
  Mathematics}.
\newblock Springer-Verlag, New York, 2001.

\bibitem[Hal83]{haldane1983}
F.D.M. Haldane.
\newblock Continuum dynamics of the 1-{D} {H}eisenberg antiferromagnet:
  Identification with the {O(3)} nonlinear sigma model.
\newblock {\em Physics Letters A}, 93(9):464 -- 468, 1983.

\bibitem[Har91]{Har91}
L.H. Harper.
\newblock On a problem of kleitman and west.
\newblock {\em Discrete Mathematics}, 93(2):169 -- 182, 1991.

\bibitem[{H}ei28]{Heisenberg1928}
W~{H}eisenberg.
\newblock {Zur Theorie des Ferromagnetismus}.
\newblock {\em Zeitschrift f\"{u}r Physik}, 49(9-10):619--636, 1928.

\bibitem[Hoc96]{hochbaum1996approximation}
Dorit~S Hochbaum.
\newblock {\em Approximation algorithms for NP-hard problems}.
\newblock PWS Publishing Co., 1996.

\bibitem[Jil15]{jiles2015introduction}
David Jiles.
\newblock {\em Introduction to magnetism and magnetic materials}.
\newblock CRC press, 2015.

\bibitem[Ken85]{Kennedy1985}
Tom Kennedy.
\newblock Long range order in the anisotropic quantum ferromagnetic heisenberg
  model.
\newblock {\em Communications in Mathematical Physics}, 100(3):447--462, 1985.

\bibitem[Ken90]{Kennedy1990}
T~Kennedy.
\newblock Exact diagonalisations of open spin-1 chains.
\newblock {\em Journal of Physics: Condensed Matter}, 2(26):5737, 1990.

\bibitem[Kom87]{Koma87}
Tohru Koma.
\newblock Thermal {B}ethe-ansatz method for the one-dimensional {H}eisenberg
  model.
\newblock {\em Progress of Theoretical Physics}, 78(6):1213--1218, 1987.

\bibitem[LTN18]{leanos2018connectivity}
J~Lea{\~n}os and AL~Trujillo-Negrete.
\newblock The connectivity of token graphs.
\newblock {\em Graphs and Combinatorics}, pages 1--14, 2018.

\bibitem[Mer94]{Merris1994}
Russell Merris.
\newblock Laplacian matrices of graphs: a survey.
\newblock {\em Linear Algebra and its Applications}, 197:143 -- 176, 1994.

\bibitem[MEU96]{PhysRevLett.76.3212}
N.~Motoyama, H.~Eisaki, and S.~Uchida.
\newblock Magnetic susceptibility of ideal spin 1 $/$2 heisenberg
  antiferromagnetic chain systems, ${\mathrm{sr}}_{2}{\mathrm{cuo}}_{3}$ and
  ${\mathrm{srcuo}}_{2}$.
\newblock {\em Phys. Rev. Lett.}, 76:3212--3215, Apr 1996.

\bibitem[MW66]{merminwagner-PhysRevLett.17.1133}
N.~D. Mermin and H.~Wagner.
\newblock Absence of ferromagnetism or antiferromagnetism in one- or
  two-dimensional isotropic heisenberg models.
\newblock {\em Phys. Rev. Lett.}, 17:1133--1136, Nov 1966.

\bibitem[OF16]{ouyang2015permutation}
Yingkai Ouyang and Joseph Fitzsimons.
\newblock Permutation-invariant codes encoding more than one qubit.
\newblock {\em Phys. Rev. A}, 93:042340, Apr 2016.

\bibitem[Oga16]{Ogata2016}
Yoshiko Ogata.
\newblock A class of asymmetric gapped hamiltonians on quantum spin chains and
  its characterization i.
\newblock {\em Communications in Mathematical Physics}, 348(3):847--895, 2016.

\bibitem[Ost05]{Ost05}
M.I. Ostrovskii.
\newblock Sobolev spaces on graphs.
\newblock {\em Quaestiones Mathematicae}, 28(4):501--523, 2005.

\bibitem[Ouy14]{ouyang2014permutation}
Yingkai Ouyang.
\newblock Permutation-invariant quantum codes.
\newblock {\em Phys. Rev. A}, 90:062317, Dec 2014.

\bibitem[Ouy17]{OUYANG201743}
Yingkai Ouyang.
\newblock Permutation-invariant qudit codes from polynomials.
\newblock {\em Linear Algebra and its Applications}, 532:43 -- 59, 2017.

\bibitem[Ouy19]{ouyang2019quantum}
Yingkai Ouyang.
\newblock Quantum storage in quantum ferromagnets.
\newblock {\em arXiv preprint arXiv:1904.01458}, 2019.

\bibitem[Pay67]{payne1967isoperimetric}
Lawrence~E Payne.
\newblock Isoperimetric inequalities and their applications.
\newblock {\em SIAM review}, 9(3):453--488, 1967.

\bibitem[PR04]{PoR04}
Harriet Pollatsek and Mary~Beth Ruskai.
\newblock {Permutationally invariant codes for quantum error correction}.
\newblock {\em Linear Algebra and its Applications}, 392(0):255--288, 2004.

\bibitem[Rud02]{Rud02}
Terry Rudolph.
\newblock {C}onstructing physically intuitive graph invariants, 2002.
\newblock arXiv:quant-ph/0206068v1.

\bibitem[Rus00]{Rus00}
Mary~Beth Ruskai.
\newblock {Pauli Exchange Errors in Quantum Computation}.
\newblock {\em Phys. Rev. Lett.}, 85(1):194--197, July 2000.

\bibitem[Sch04]{schrijver2004combinatorial}
Alexander Schrijver.
\newblock Combinatorial optimization: Polyhedra and efficiency (algorithms and
  combinatorics).
\newblock {\em Journal-Operational Research Society}, 55(9):1018--1018, 2004.

\bibitem[Sha88]{Sha88}
B.~Sriram Shastry.
\newblock Exact solution of an \textit{S} =1/2 {H}eisenberg antiferromagnetic
  chain with long-ranged interactions.
\newblock {\em Phys. Rev. Lett.}, 60:639--642, Feb 1988.

\bibitem[SS81]{shastry1981exact}
B~Sriram Shastry and Bill Sutherland.
\newblock Exact ground state of a quantum mechanical antiferromagnet.
\newblock {\em Physica B+ C}, 108(1-3):1069--1070, 1981.

\bibitem[ST14]{spielman2014nearly}
Daniel~A Spielman and Shang-Hua Teng.
\newblock Nearly linear time algorithms for preconditioning and solving
  symmetric, diagonally dominant linear systems.
\newblock {\em SIAM Journal on Matrix Analysis and Applications},
  35(3):835--885, 2014.

\bibitem[Tho65]{thouless1965exchange}
DJ~Thouless.
\newblock Exchange in solid 3{H}e and the {H}eisenberg {H}amiltonian.
\newblock {\em Proceedings of the Physical Society}, 86(5):893, 1965.

\bibitem[Til00]{Til00}
Jean-Pierre Tillich.
\newblock Edge isoperimetric inequalities for product graphs.
\newblock {\em Discrete Mathematics}, 213(1–3):291 -- 320, 2000.

\bibitem[TST04]{tamura2004tunable}
Hiroyuki Tamura, Kenji Shiraishi, and Hideaki Takayanagi.
\newblock Tunable exchange interaction in quantum dot devices.
\newblock {\em Japanese journal of applied physics}, 43(5B):L691, 2004.

\bibitem[Var04]{varga-GCT}
Richard~S Varga.
\newblock {\em Ger\v{s}gorin and his circles}.
\newblock Springer-Verlag, first edition, 2004.

\bibitem[YDI{\etalchar{+}}15]{yamanaka2015swapping}
Katsuhisa Yamanaka, Erik~D Demaine, Takehiro Ito, Jun Kawahara, Masashi Kiyomi,
  Yoshio Okamoto, Toshiki Saitoh, Akira Suzuki, Kei Uchizawa, and Takeaki Uno.
\newblock Swapping labeled tokens on graphs.
\newblock {\em Theoretical Computer Science}, 586:81--94, 2015.

\end{thebibliography}
\newcommand{\etalchar}[1]{$^{#1}$}

\end{document}